\documentclass[prl,twocolumn,
 notitlepage,groupedaddress,
aps, reprint,
superscriptaddress,
amsmath,amssymb,
floatfix]{revtex4-2}
\usepackage{pifont}%
\usepackage{bbold}%
\usepackage{braket}
\usepackage[dvipsnames]{xcolor} %
\usepackage{amsmath, amssymb, graphicx, bm}
\usepackage{comment}
\usepackage{pifont}%
\usepackage{comment}
\usepackage{amsthm}
\usepackage{hyperref}

\newtheorem{prop}{Proposition}
\newtheorem{theorem}{Theorem}

\begin{document}

    \title{Calculating trace distances of bosonic states in Krylov subspace}
    
    \author{Javier Martínez-Cifuentes}
    \affiliation{Département de génie physique, \'Ecole polytechnique de Montr\'eal, Montr\'eal, QC, H3T 1J4, Canada}
    
    \author{Nicol\'as Quesada}
    \email{nicolas.quesada@polymtl.ca}
    \affiliation{Département de génie physique, \'Ecole polytechnique de Montr\'eal, Montr\'eal, QC, H3T 1J4, Canada}
    
    \date{\today}

    \begin{abstract}

Continuous-variable quantum systems are central to quantum technologies, with Gaussian states playing a key role due to their broad applicability and simple description via first and second moments. Distinguishing Gaussian states requires computing their trace distance, but no analytical formula exists for general states, and numerical evaluation is difficult due to the exponential cost of representing infinite-dimensional operators.
We introduce an efficient numerical method to compute the trace distance between a pure and a mixed Gaussian state, based on a generalized Lanczos algorithm that avoids explicit matrix representations and uses only moment information. The technique extends to non-Gaussian states expressible as linear combinations of Gaussian states. We also show how it can yield lower bounds on the trace distance between mixed Gaussian states, offering a practical tool for state certification and learning in continuous-variable quantum systems.
    \end{abstract}
    
    \maketitle

    \textit{Introduction}--- Continuous variable (CV) systems~\cite{serafini2023quantum, cerf2007quantum}, such as photonic systems, are of central importance for the development of quantum technologies, notably in the fields of quantum communication~\cite{usenko2025continuous, jain2022practical, madsen2012continuous, vaidman1994teleportation}, quantum metrology~\cite{fadel2025quantum, aasi2013enhanced, meyer2001experimental, mccormick2019quantum} and quantum information processing~\cite{braunstein2005quantum, weedbrook2012gaussian, andersen2010continuous, asavanant2022optical}. These systems have also attracted significant theoretical and experimental interest due to their promise of offering a robust and scalable platform for the implementation of fault-tolerant quantum computation~\cite{psiquantum2025manufacturable, bartolucci2023fusion, bourassa2021blueprint, andersen2015hybrid, menicucci2014fault, knill2001scheme, menicucci2006universal, nielsen2004optical, gottesman2001encoding, lloyd1999quantum}, as well as for having the potential to demonstrate near-term quantum computational advantage, particularly at sampling tasks such as Boson Sampling~\cite{aaronson2011computational, hamilton2017gaussian, madsen2022quantum, zhong2020quantum, zhong2021phase,deng2023gaussian, liu2025robust}. 
    
    Among the possible states associated with CV quantum mechanical systems, Gaussian states~\cite{serafini2023quantum, ferraro2005gaussian} have a simple mathematical description, and commonly arise in the modeling of a wide range of phenomena. In the context of photonics, for instance, they describe many of the light sources that are experimentally available (e.g. coherent, thermal or squeezed light)~\cite{barnett2002methods, quesada2022beyond}. These states are also a central building block in the generation of non-Gaussian states that act as resources for several quantum computational tasks~\cite{walschaers2021non, tiedau2019scalability, konno2024logical}.

    Gaussian states are completely parametrized by a \textit{vector of first moments}, $\bm{r}$, and a \textit{covariance matrix}, $\bm{V}$. In principle, no matter the number of modes in the system, the pair $(\bm{r},\bm{V})$ can be experimentally estimated using homodyne or heterodyne measurements~\cite{serafini2023quantum, ferraro2005gaussian, weedbrook2012gaussian}, implying that the characterization of these states is readily available in the laboratory. 
    
    Given two Gaussian states $\hat{\varrho}_1$ and $\hat{\varrho}_2$, it is natural to investigate how well we can distinguish between them when having access to their parametrization in terms of the pairs $(\bm{r}_1, \bm{V}_1)$ and $(\bm{r}_2, \bm{V}_2)$. This is equivalent to asking what is the probability of successfully discriminating between $\hat{\varrho}_1$ and $\hat{\varrho}_2$, and how to express it as a function of the covariance matrices and vectors of first moments. This problem commonly arises, for instance, when comparing an experimentally prepared state with its intended theoretical description. As dictated by the Holevo-Helstrom theorem~\cite{helstrom1969quantum}, the best success probability that we can hope for depends on the \textit{trace distance} between the states. Mathematically, the trace distance between $\hat{\varrho}_1$ and $\hat{\varrho}_2$ is defined as $\frac{1}{2}\|\Delta\hat{\varrho}\|_1$, where $\Delta\hat{\varrho}=\hat{\varrho}_1-\hat{\varrho}_2$, and the trace norm (or 1-norm) of the operator $\hat{A}$ is given by $\|\hat{A}\|_1=\mathrm{Tr}(\sqrt{\hat{A}^\dagger\hat{A}})$. Thus, the task of discriminating between any two Gaussian states boils down to finding a way to compute their trace distance in terms of the data $(\bm{r}_1, \bm{V}_1)$ and $(\bm{r}_2, \bm{V}_2)$.  This calculation also proves relevant in the study of the sample complexity of efficiently estimating properties of these quantum states~\cite{mele2025learning, anshu2024survey}.

    Currently, no closed formula for the trace distance between arbitrary Gaussian states in terms of $(\bm{r}_1, \bm{V}_1)$ and $(\bm{r}_2, \bm{V}_2)$ is known, although a simple equation has been established for the case in which both states are pure~\cite{weedbrook2012gaussian}. On this account, there have been significant efforts to derive analytical bounds on the trace distance as functions of these data~\cite{mele2025learning, bittel2025optimal, mele2025symplectic, bittel2025energy, holevo2024estimates, weedbrook2012gaussian, van2024error, becker2021energy}. From the numerical point of view, the computation of $\frac{1}{2}\|\Delta\hat{\varrho}\|_1$ also proves to be challenging. Current techniques rely on finding a matrix representation of $\Delta\hat{\varrho}$ in a countable basis of the corresponding Hilbert space, usually the Fock basis~\cite{bulmer2026simulating}, and diagonalizing it in order to compute the trace norm. This approach requires the definition of a \textit{cutoff} that allows one to truncate the Hilbert space basis. For a single-mode system, for instance, the cutoff, $c$, is chosen so that $\langle n|\Delta\hat{\varrho}|m\rangle$ is negligible for $n,m > c$. For an $M$-mode system, when choosing the same $c$ for all modes, the number of matrix elements that must be evaluated grows as $\mathcal{O}(c^M)$, making the computation of the trace distance too costly for the majority of cases of interest. Other algorithms employ more efficient ways of truncating the Hilbert space basis in order to obtain an approximation of $\frac{1}{2}\|\Delta\hat{\varrho}\|_1$ up to a fixed precision~\cite{quesada2020exact,qi2022efficient,mele2025achievable}. However, they still rely on finding a number of matrix elements that increases exponentially with $M$.
    
    In this work, we circumvent current problems surrounding the numerical calculation of the trace distance between Gaussian states. First, we propose a method for its efficient computation in the case in which one of the states is pure while the other is mixed. This technique completely avoids the need of finding matrix representations of $\Delta\hat{\varrho}$, involves only relatively simple operations over the pairs $(\bm{r}_1, \bm{V}_1)$ and $(\bm{r}_2, \bm{V}_2)$, and has a time complexity that scales polynomially with the number of modes in the system. Our results are based on a generalization of the well-known Lanczos algorithm~\cite{kreuzer1981lanczos}, a Krylov subspace~\cite{saad2003iterative, caruso2021inverse} technique for numerically computing eigenvalues of Hermitian operators. Second, we show that this method can also be used to compute the trace distance between non-Gaussian states, one of them pure, that can be written as a linear combination of Gaussian states. Third, we argue that while this technique cannot be used to obtain the trace distance between two mixed Gaussian states, it can be used to obtain lower bounds on this value.

    \textit{CV systems formalism}--- An $M$-mode CV quantum system is associated to the Hilbert space $\mathcal{H}=L^2(\mathbb{R}^M)\simeq \bigotimes_{k=1}^M L^2(\mathbb{R})$, i.e. the Hilbert space of square-integrable functions from $\mathbb{R}^M$ to $\mathbb{C}$, over which we can define a set of quadrature operators, $\{(\hat{q}_k, \hat{p}_k)\}_{k=1}^M$, one pair for each mode, that satisfy the canonical commutation relations $[\hat{q}_j, \hat{p}_k] = i \hbar \delta_{j,k}$, $[\hat{q}_j, \hat{q}_k] = [\hat{p}_j, \hat{p}_k] = 0$~\cite{serafini2023quantum}. Here, $[\hat{A}, \hat{B}] = \hat{A}\hat{B} - \hat{B}\hat{A}$ is the commutator between the linear operators $\hat{A}$ and $\hat{B}$, $\delta_{j,k}$ stands for the Kronecker delta, and $\hbar/2$ is a constant (the variance of the vacuum). Define the operator vector $\hat{\bm{r}} = (\hat{q}_1,\dots, \hat{q}_M, \hat{p}_1,\dots,\hat{p}_M)$. Then, for any density operator $\hat{\varrho}$ defined over $\mathcal{H}$, the components of its vector of first moments, $\bm{r}\in \mathbb{R}^{2M}$, are computed as $r_k = \mathrm{Tr}(\hat{r}_k\hat{\varrho})$, while the entries of its covariance matrix, $\bm{V}\in \mathbb{R}^{2M,2M}$, are computed as $V_{j,k} = \frac{1}{2}\mathrm{Tr}(\{\hat{r}_j,\hat{r}_k\}\hat{\varrho}) - r_jr_k$, where $\{\hat{A},\hat{B}\}=\hat{A}\hat{B} + \hat{B}\hat{A}$ stands for the anticommutator between $\hat{A}$ and $\hat{B}$. The covariance matrix of any valid quantum state, i.e., a Hermitian, positive semi-definite $\hat{\varrho}$ with trace one, is symmetric and satisfies the uncertainty relation $\bm{V} + i\frac{\hbar}{2}\bm{\Omega}\succeq 0$~\cite{serafini2023quantum}, which also implies that $\bm{V}$ is positive definite. In this inequality, $\bm{\Omega} = \big(\begin{smallmatrix}0&\mathbb{I}_M\\-\mathbb{I}_M&0\end{smallmatrix}\big)$ is the \textit{symplectic form} and $\mathbb{I}_M\in \mathbb{R}^{M, M}$ is the identity matrix.

    The characteristic function, $\chi(\bm{s}, \hat{\varrho})$, of the operator $\hat{\varrho}$ is defined as $\chi(\bm{s}, \hat{\varrho}) = \mathrm{Tr}[\hat{D}(\bm{s})\hat{\varrho}]$, where $\bm{s}\in \mathbb{R}^{2M}$ and $\hat{D}(\bm{s}) = \exp(-i\bm{s}^\mathrm{T}\bm{\Omega}\hat{\bm{r}}/\hbar)$ is the Weyl or displacement operator. The state $\hat{\varrho}$ is said to be a Gaussian state if its characteristic function can be written as 
    $\chi(\bm{s},\hat{\varrho}) = \exp\left(-\frac{1}{2}\bm{s}^\mathrm{T}\bm{\Omega}^\mathrm{T}\bm{V}\bm{\Omega}\bm{s}-i\bm{r}^\mathrm{T}\bm{\Omega}\bm{s}\right)$. This condition implies that Gaussian states are in one-to-one correspondence with the pairs $(\bm{r}, \bm{V})$. If the Gaussian state is pure, i.e., if it can be written as $\hat{\varrho}=|\psi\rangle\langle\psi|$ with $|\psi\rangle\in \mathcal{H}$, its covariance matrix satisfies the relation $\det[(2/\hbar)\bm{V}] = 1$.

    \textit{The trace distance}--- Given two quantum states $\hat{\varrho}_1$ and $\hat{\varrho}_2$ (defined over $\mathcal{H}$) with equal chance of being generated by a source, the minimum probability of committing and error when discriminating between them by means of a quantum measurement is dictated by the Holevo-Helstrom bound~\cite{helstrom1969quantum}: $\Pr_{\text{error, min}}=\frac{1}{2}-\frac{1}{2}d(\hat{\varrho}_1, \hat{\varrho}_2) $. This theorem provides the trace distance with a strong operational interpretation that other distinguishability measures, such as the fidelity, lack. Hence the great interest on finding efficient ways to compute it. If both states are pure, so that $\hat{\varrho}_k = |\psi_k\rangle\langle\psi_k|$ for $k \in \{1,2\}$, one has that $d(\hat{\varrho}_1, \hat{\varrho}_2) = \sqrt{1-|\langle\psi_1|\psi_2\rangle|^2}$~\cite{weedbrook2012gaussian}. In a more general setting, the computation of the trace distance involves the diagonalization of $\Delta\hat{\varrho}$. This operator is Hermitian and compact, so its eigenvalues, $\{\lambda\}$, are real, and its eigenvectors, $\{|\lambda\rangle\}$, constitute an orthonormal basis of $\mathcal{H}$. Note that the set of eigenvectors is countable on account of $\mathcal{H}$ being separable. Using this basis, one can readily see that $d(\hat{\varrho}_1, \hat{\varrho}_2)=\frac{1}{2}\sum_{\lambda}|\lambda|$. Behind this simple expression lies the fact that computing the set $\{\lambda\}$ is not an easy task. In most cases, one needs to compute an approximate matrix representation of $\Delta\hat{\varrho}$, in another orthonormal basis of $\mathcal{H}$, before the diagonalization, a method that generally requires the truncation of the basis used.
    
    More insight into the computation of the trace distance comes from the factorization $\Delta\hat{\varrho} = \hat{U}\left(\sum_{\lambda}\lambda|\lambda\rangle\langle\lambda|\right)\hat{U}^\dagger$, where $\hat{U}$ is the unitary operator that diagonalizes $\Delta\hat{\varrho}$. After some rearrangement, one obtains that $\Delta\hat{\varrho} = \hat{U}\left(\sum_{\lambda>0}\lambda|\lambda\rangle\langle\lambda|\right)\hat{U}^\dagger-\hat{U}\left(\sum_{\lambda<0}|\lambda||\lambda\rangle\langle\lambda|\right)\hat{U}^\dagger=\hat{P} - \hat{Q}$, where the operators $\hat{P}$, $\hat{Q}$ are positive definite and have orthogonal ranges ($\hat{P}\hat{Q} = \hat{Q}\hat{P} = 0$). This decomposition is known as \textit{positive orthogonal decomposition}, and it is common to Hermitian operators. It is not difficult to see that $d(\hat{\varrho}_1, \hat{\varrho}_2) = \frac{1}{2}\mathrm{Tr}(\hat{P} + \hat{Q})$. Moreover, since $\mathrm{Tr}(\Delta\hat{\varrho}) = \mathrm{Tr}(\hat{P}-\hat{Q}) = 0$, we find that $d(\hat{\varrho}_1, \hat{\varrho}_2)=\mathrm{Tr}(\hat{P})=\mathrm{Tr}(\hat{Q})$. Thus, in order to compute the trace distance, we need only find the positive \textit{or} the negative eigenvalues of $\Delta\hat{\varrho}$. This insight is particularly useful for the case in which one of the states involved, say $\hat{\varrho}_1$, is pure. Indeed, we find the following result.
    \begin{theorem}
        Let $\hat{\varrho}_1$ and $\hat{\varrho}_2$ be quantum states defined over the Hilbert space $\mathcal{H}$. Suppose that $\hat{\varrho}_1$ is rank-1, i.e. it represents a pure state, so that $\hat{\varrho}_1 = |\psi_1\rangle\langle\psi_1|$ with $|\psi_1\rangle\in \mathcal{H}$. Then, the operator $\Delta\hat{\varrho}=\hat{\varrho}_1-\hat{\varrho}_2$ has only one positive eigenvalue.
        \label{theo:only_one_positive_eival}
    \end{theorem}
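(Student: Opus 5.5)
The plan is a variational (Courant--Fischer type) argument that uses only the rank-one structure of $\hat{\varrho}_1$ and the positivity of $\hat{\varrho}_2$. I split the claim into two parts: (i) $\Delta\hat{\varrho}$ has \emph{at most} one positive eigenvalue, counted with multiplicity; (ii) it has \emph{at least} one positive eigenvalue whenever $\hat{\varrho}_1\neq\hat{\varrho}_2$. In the trivial case $\hat{\varrho}_1=\hat{\varrho}_2$ we have $\Delta\hat{\varrho}=0$, so the statement should be read with this case excluded.

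For (i), suppose for contradiction that $\Delta\hat{\varrho}$ has two orthonormal eigenvectors $|\lambda_1\rangle,|\lambda_2\rangle$ with $\lambda_1,\lambda_2>0$. They exist because $\Delta\hat{\varrho}$ is compact and Hermitian, as noted before the theorem. Let $W=\mathrm{span}\{|\lambda_1\rangle,|\lambda_2\rangle\}$. For any nonzero $|\phi\rangle=a|\lambda_1\rangle+b|\lambda_2\rangle\in W$,
\begin{equation*}
\langle\phi|\Delta\hat{\varrho}|\phi\rangle=\lambda_1|a|^2+\lambda_2|b|^2>0 .
\end{equation*}
On the other hand, $W$ is two-dimensional and the condition $\langle\psi_1|\phi\rangle=0$ is a single linear constraint. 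Hence $W$ contains a nonzero $|\phi\rangle$ orthogonal to $|\psi_1\rangle$. For this vector,
\begin{equation*}
\langle\phi|\Delta\hat{\varrho}|\phi\rangle=|\langle\psi_1|\phi\rangle|^2-\langle\phi|\hat{\varrho}_2|\phi\rangle=-\langle\phi|\hat{\varrho}_2|\phi\rangle\le 0 ,
\end{equation*}
since $\hat{\varrho}_2\succeq 0$. This contradicts the strict positivity above, so the positive eigenspace has dimension at most one. Equivalently, $-\hat{\varrho}_2\preceq 0$ and a rank-one positive perturbation can create at most one positive eigenvalue, which is Weyl interlacing.

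For (ii), I use the trace identity from the positive orthogonal decomposition: $\mathrm{Tr}(\hat{P})=\mathrm{Tr}(\hat{Q})$, because $\mathrm{Tr}\,\Delta\hat{\varrho}=0$ and both states are trace class. If $\Delta\hat{\varrho}$ had no positive eigenvalue, then $\hat{P}=0$, hence $\mathrm{Tr}(\hat{Q})=0$, and since $\hat{Q}\succeq 0$ this forces $\hat{Q}=0$. That would give $\Delta\hat{\varrho}=0$, i.e.\ $\hat{\varrho}_1=\hat{\varrho}_2$. So in the nontrivial case there is exactly one positive eigenvalue, and by $d=\mathrm{Tr}(\hat{P})$ it equals the trace distance.

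The only delicate point is the infinite dimension. I will rely on the spectral theorem for compact self-adjoint operators to obtain genuine eigenvectors, and on trace-class membership of $\Delta\hat{\varrho}$ to justify the trace identity. The dimension-counting step is finite-dimensional inside $W$, so it needs no modification.
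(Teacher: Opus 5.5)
Your proof is correct. The ``at most one'' half uses the same idea as the paper. There, Courant--Fischer in min--max form with $S_1=\mathrm{span}(|\psi\rangle)$ gives $\lambda_2\le\sup_{|\chi\rangle\perp|\psi\rangle}\bigl(-\langle\chi|\hat{\varrho}|\chi\rangle\bigr)\le 0$. Your observation that a two-dimensional subspace on which $\Delta\hat{\varrho}$ is strictly positive must meet $|\psi_1\rangle^{\perp}$ nontrivially is exactly the half of Courant--Fischer this step needs, proved directly. It also makes explicit that eigenvalues are counted with multiplicity, which the paper needs to conclude that $\hat{P}$ is rank one.

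The existence half is where you genuinely differ. The paper puts the trial vector $|\psi\rangle$ into the max--min form to get $\lambda_+\ge 1-\langle\psi|\hat{\varrho}|\psi\rangle$. It then obtains strict positivity from $\langle\psi|\hat{\varrho}|\psi\rangle\le\sqrt{\mathrm{Tr}(\hat{\varrho}^2)}<1$, which relies on assuming $\hat{\varrho}$ is mixed; that assumption appears in the proof but not in the theorem. You instead use $\mathrm{Tr}(\hat{P})=\mathrm{Tr}(\hat{Q})$ to show that $\hat{P}=0$ forces $\Delta\hat{\varrho}=0$.

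The paper's route buys a quantitative bound, $d\ge 1-\langle\psi|\hat{\varrho}|\psi\rangle$, in keeping with the variational viewpoint it uses later. Yours gives no bound, but it works under the sharp hypothesis $\hat{\varrho}_1\neq\hat{\varrho}_2$ and so covers a pure $\hat{\varrho}_2$. It also correctly flags that the statement as written fails when $\hat{\varrho}_1=\hat{\varrho}_2$. The paper's argument could reach the same generality, since $\langle\psi|\hat{\varrho}|\psi\rangle<1$ exactly when $\hat{\varrho}\neq|\psi\rangle\langle\psi|$.
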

    
    We present a proof of this theorem in the Supplemental Material. Let $\lambda_+$ be the only positive eigenvalue of $|\psi\rangle\langle\psi|-\hat{\varrho}$, then a corollary to Theorem~\ref{theo:only_one_positive_eival} is that $d(|\psi\rangle\langle\psi|,\hat{\varrho}) = \lambda_+$, i.e., the computation of the trace distance between a pure and a mixed state boils down to computing a single, practically isolated eigenvalue. The corresponding normalized eigenvector $\ket{\lambda_+}$ conviniently gives the two-outcome optimal POVM for distinguishing the two states, $\{\ket{\lambda_+} \bra{\lambda_+} ,\hat{\mathbb{I}} - \ket{\lambda_+} \bra{\lambda_+} \}$, with $\hat{\mathbb{I}}$ the identity operator in $\mathcal{H}$.
    Moreover, this result allows us to interpret the calculation of $d(|\psi\rangle\langle\psi|,\hat{\varrho})$ from a variational point of view, as $-\lambda_+$ can be seen as the ``ground state energy" of the ``Hamiltonian" $\hat{\varrho}-|\psi\rangle\langle\psi|$. This approach is useful for computing analytical lower bounds on the trace distance. In the Supplemental Material we present one of such bounds for the case in which both states are Gaussian.

    \textit{Lanczos algorithm}--- The analytical computation of $d(|\psi\rangle\langle\psi|,\hat{\varrho})$ still proves to be a significant challenge, even if it relies on obtaining only the largest magnitude eigenvalue of a compact, Hermitian operator~\footnote{Since $\lambda_+ = \mathrm{Tr}(\hat{P}) = \mathrm{Tr}(\hat{Q})$, it follows that $\lambda_+$ is equal to the sum of the magnitudes of all negative eigenvalues of $|\psi\rangle\langle\psi|-\hat{\varrho}$, which implies that $\lambda_+$ is greater than or equal to each of these absolute values}. However, this same fact opens the door to the numerical computation of $d(|\psi\rangle\langle\psi|,\hat{\varrho})$ using techniques that are specialized in finding this type of eigenvalues. In particular, \textit{Krylov subspace}~\cite{saad2003iterative, caruso2021inverse} techniques are known to quickly approximate the eigenvalues at the edges of the spectrum. These methods are able to find approximate solutions to inverse linear problems of the form $\hat{A}|b\rangle = |c\rangle$, with $|b\rangle, |c\rangle \in \mathcal{H}$ and $\hat{A}$ a linear operator defined over $\mathcal{H}$, by searching over the \textit{$N$th order Krylov subspace}, $\mathrm{span}\{|c\rangle, \hat{A}|c\rangle, \hat{A}^2|c\rangle,\dots, \hat{A}^{N-1}|c\rangle\} \subset \mathcal{H}$, associated to $\hat{A}$ and $|c\rangle$~\cite{caruso2021inverse}. For the computation of eigenvalues, the \textit{Arnoldi} and \textit{Lanczos} algorithms~\cite{saad2003iterative} are widely used iterative methods that compute an orthonormal basis for the Krylov subspace, and use it to find approximations to the largest eigenvalues of $\hat{A}$, improving their accuracy for increasing $N$. In what follows, we will describe Lanczos algorithm, as presented in Ref.~\cite{kreuzer1981lanczos}, since it is specially tailored for Hermitian operators.

    Let $\hat{A}$ be Hermitian and let $|c\rangle$ be a \textit{trial} normalized vector that can be fully expanded in terms of the eigenvectors of $\hat{A}$, and such that $\hat{A}^\ell|c\rangle$ exists for all $\ell \in \mathbb{Z}_{\geq 0}$. At each step $\ell\geq 0$, Lanczos algorithm computes the vectors
    \begin{equation}
        |\tilde{\phi}_{\ell}\rangle = \hat{A}^{\ell}|c\rangle - \sum_{k=1}^{\ell-1} \langle\phi_k|\hat{A}^{\ell}|c\rangle|\phi_k\rangle
        \label{eq:lanczos_arbitrary}
    \end{equation}
    and $|\phi_{\ell}\rangle = |\tilde{\phi}_{\ell}\rangle/\sqrt{\langle\tilde{\phi}_{\ell}|\tilde{\phi}_{\ell}\rangle}$, where $|\phi_0\rangle = |c\rangle$. Afterwards, the matrix $\bm{T}_{\ell}\in \mathbb{R}^{\ell+1, \ell+1}$, with entries $T_{j,k}^{(\ell)} = \langle\phi_j|\hat{A}|\phi_k\rangle$ for $j,k \in \{0, \dots, \ell\}$, is constructed and diagonalized. This matrix is real and \textit{tridiagonal} on account of $\hat{A}$ being Hermitian. The resulting eigenvalues, $\{t^{(\ell)}\}$, are approximations to the eigenvalues of $\hat{A}$, with guaranteed convergence at the limit $\ell\rightarrow\infty$~\cite{kreuzer1981lanczos}. 

    For our study, the eigenvectors of $\hat{A} = |\psi\rangle\langle\psi|-\hat{\varrho}$ constitute an orthonormal basis of $\mathcal{H}$, so any trial vector can be fully expanded in terms of them. This gives us a significant freedom in the choice of the trial vector, while still guaranteeing that $\hat{A}^\ell|c\rangle$ is well defined for any $\ell$. For simplicity, let us set $|c\rangle=|\psi\rangle$. With these choices, we find the following expressions for $\hat{A}^\ell|c\rangle$ and $|\tilde{\phi}_\ell\rangle$:
    \begin{equation}
        \hat{A}^\ell|c\rangle = \sum_{k=0}^\ell C_{\ell, k}\hat{\varrho}^k|\psi\rangle, \quad |\tilde{\phi}_\ell\rangle = \sum_{k=0}^\ell \tilde{D}_{\ell, k}\hat{\varrho}^k|\psi\rangle,
        \label{eq:lanczos_vectors_update}
    \end{equation}
    where the entries of the matrices $\bm{C}, \tilde{\bm{D}}\in \mathbb{R}^{\ell+1, \ell+1}$ are recursively computed using the relations
    \begin{equation}
        C_{\ell,0} = (\bm{C}\bm{\mathcal{G}})_{\ell-1, 0},\;C_{l,k} = -C_{\ell-1, k-1}\text{ for }1\leq k\leq \ell,
        \label{eq:recurrence_matrix_C}
    \end{equation}
    \begin{equation}
        \tilde{D}_{\ell,\ell} = C_{\ell, \ell},\; \tilde{D}_{l,k} = (\bm{C}-\bm{C}\bm{\mathcal{G}}\tilde{\bm{D}}^\mathrm{T}\tilde{\bm{D}})_{\ell, k}\text{ for }0\leq k\leq \ell-1
        \label{eq:recurrence_matrix_D}
    \end{equation}
    with the initial conditions $C_{0,0}=\tilde{D}_{0,0}=1$. Notice that with each iteration, new entries are added to $\bm{C}$ and $\tilde{\bm{D}}$. In order to keep their shape $(\ell+1, \ell+1)$ at step $\ell$, we must set $C_{j,k}=\tilde{D}_{j,k}=0$ for $0\leq j\leq \ell-1$ and  $j+1\leq k\leq \ell$. This means that $\bm{C}$ and $\tilde{\bm{D}}$ are always lower triangular. A proof of Eqs.~\eqref{eq:lanczos_vectors_update} to ~\eqref{eq:recurrence_matrix_D} can be found in the Supplemental Material.

    We refer to the matrix $\bm{\mathcal{G}}\in \mathbb{R}^{\ell+1, \ell+1}$, whose entries are defined as $(\bm{\mathcal{G}})_{j,k}=\langle\psi|\hat{\varrho}^{j+k}|\psi\rangle$ for $j,k\in \{0,\dots,\ell\}$, as the \textit{metric} associated to $\hat{\varrho}$ and $|\psi\rangle$. The reason behind this nomenclature is that $\bm{\mathcal{G}}$ mediates the inner product between the vectors $|\tilde{\phi}_\ell\rangle$, while storing all the information about the nature of the states involved in the computation of the trace distance. Indeed, we may identify the vector $|\tilde{\phi}_\ell\rangle$ with the $l$th row of the matrix $\tilde{\bm{D}}$, $(\tilde{\bm{d}}_\ell)^{\mathrm{T}}$. Then, we find that $\langle\tilde{\phi}_k|\tilde{\phi}_\ell\rangle = (\tilde{\bm{d}}_k)^\mathrm{T}\bm{\mathcal{G}}\tilde{\bm{d}}_{\ell} = (\tilde{\bm{D}}\bm{\mathcal{G}}\tilde{\bm{D}}^\mathrm{T})_{k,\ell}$. In addition to this, it can be shown that the entries of the matrix $\bm{T}_\ell$ can also be computed in a similar way:
    \begin{align}
        T^{(\ell)}_{j,k} &= \langle\phi_j|(|\psi\rangle\langle\psi|-\hat{\varrho})|\phi_k\rangle\nonumber\\
        &=(\bm{D}\bm{\mathcal{G}})_{j,0}(\bm{D}\bm{\mathcal{G}})_{k,0} - (\bm{D}\bm{\mathcal{G}}'\bm{D}^\mathrm{T})_{j,k}.
        \label{eq:tridiagonal_entries}
    \end{align}
    Here, $\bm{D}$ is obtained from $\tilde{\bm{D}}$ after normalization of the vectors $|\tilde{\phi}_\ell\rangle$, and $\bm{\mathcal{G}}'$ is a \textit{shifted} metric, whose entries read $(\bm{\mathcal{G}}')_{j,k}=\langle\psi|\hat{\varrho}^{j+k+1}|\psi\rangle$.

    \begin{figure}[!t]
        {
        \includegraphics[width=\columnwidth]{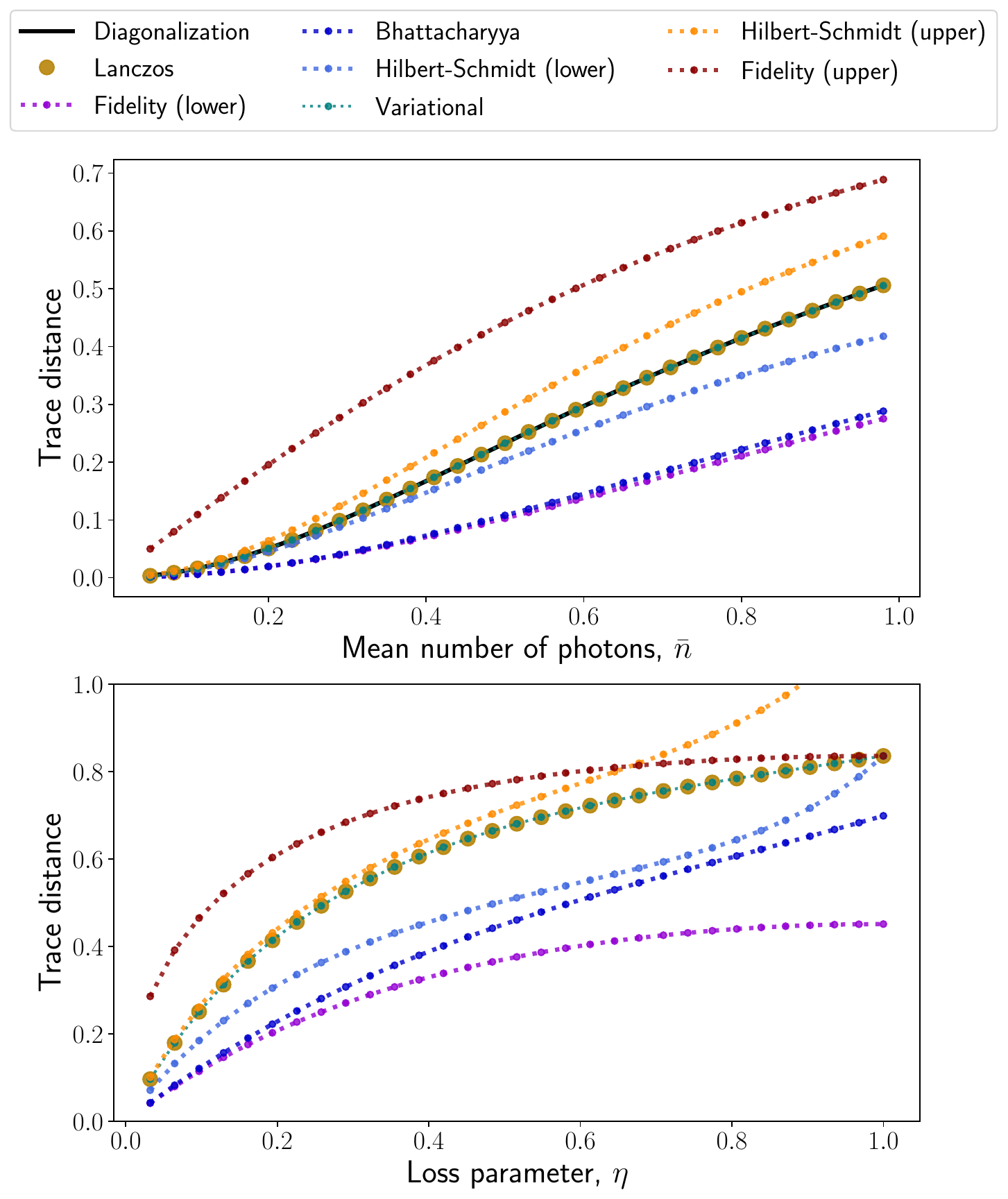}%
        }
        \caption{\textit{(Top)} Trace distance between a single-mode \textit{squashed state}~\cite{martinez2023classical}, $\bm{r} = \bm{0}$ and $\bm{V} = \frac{\hbar}{2}\big(\begin{smallmatrix}1&0\\0&1 + 4\bar{n}\end{smallmatrix}\big)$, and the vacuum state, $\bm{r} = \bm{0}$ and $\bm{V} = \frac{\hbar}{2}\mathbb{I}_2$, as a function of the mean number of photons, $\bar{n}$. \textit{(Bottom)} Trace distance between the $(M=10)$-mode pure Gaussian state parametrized by $\bm{r}_1=\bm{0}$ and $\bm{V}_1=\frac{\hbar}{2}(e^{-2s}\mathbb{I}_M)\oplus(e^{2s}\mathbb{I}_M)$, and the mixed state represented by $\bm{r}_2=\bm{0}$ and $\bm{V}_2=\frac{\hbar}{2}([(1-\eta)e^{-2s}+\eta]\mathbb{I}_M)\oplus([(1-\eta)e^{2s}+\eta]\mathbb{I}_M)$. Results are presented as a function of the loss parameter $\eta\in[0.0,1.0]$, for $s=0.5$. In both figures, the estimations running Lanczos algorithm for $\ell = 10$ steps are shown in yellow circles. In the single-mode case, results using the diagonalization of $|\psi\rangle\langle\psi|-\hat{\varrho}$, in Fock basis with cutoff $c=100$, are represented by a solid black line. Several bounds on the trace distance~\cite{weedbrook2012gaussian}, including the variational bound derived in the Supplemental Material, are shown in dotted lines.} 
        \label{fig:td_gaussian_example}
    \end{figure}

    \textit{Computing the metric}--- Up to this point, we have not put any restrictions over the states $|\psi\rangle\langle\psi|$ and $\hat{\varrho}$ other than requiring them to be valid quantum states. This means that, in principle, our description of Lanczos algorithm remains sound for any pair of states defined over $\mathcal{H}$. However, the actual implementation of the algorithm will rely on how easy it is to obtain $\bm{\mathcal{G}}$ and $\bm{\mathcal{G}}'$ or, equivalently, how easy it is to compute $\langle\psi|\hat{\varrho}^\ell|\psi\rangle$. In what follows, we will show how to carry out this calculation for Gaussian states. 

    Let $\{\hat{\varrho}_k\}_{k=1}^m$ be a set of ordered $M$-mode Gaussian states, all of them defined over $\mathcal{H}$. This set of states is completely characterized by the set of ordered pairs $\{(\bm{r}_k, \bm{V}_k)\}_{k=1}^m$. The \textit{Bargmann invariant}, also known as \textit{multivariate trace}, of $\{\hat{\varrho}_k\}_{k=1}^m$ is defined as $\mathrm{Tr}(\hat{\varrho}_1\hat{\varrho}_2\cdots\hat{\varrho}_m)$. Note that we can immediately link the computation of Bargmann invariants to the computation of $\bm{\mathcal{G}}$ and $\bm{\mathcal{G}}'$ by means of the relation $\langle\psi|\hat{\varrho}^\ell|\psi\rangle = \mathrm{Tr}(\hat{\varrho}^\ell|\psi\rangle\langle\psi|)$, i.e., $\langle\psi|\hat{\varrho}^\ell|\psi\rangle$ is the Bargmann invariant of the set $\{\hat{\varrho},\dots, \hat{\varrho}, |\psi\rangle\langle\psi|\}$, where $\hat{\varrho}$ is repeated $\ell$ times. In terms of $\{(\bm{r}_k, \bm{V}_k)\}_{k=1}^m$, $\mathrm{Tr}(\hat{\varrho}_1\hat{\varrho}_2\cdots\hat{\varrho}_m)$ can be written as~\cite{xu2025bargmann}
    \begin{equation}
        \mathrm{Tr}(\hat{\varrho}_1\hat{\varrho}_2\cdots\hat{\varrho}_m)=\frac{\exp\left(-\tfrac{1}{2}\bm{z}^{\mathrm{T}}\bm{M}^{-1}\bm{z}\right)}{\sqrt{\det(\bm{M}/\hbar)}} ,\label{eq:bargmann_invariant}
    \end{equation}
    where $\bm{z}= \bigoplus_{k=1}^{m-1}(\bm{r}_k - \bm{r}_m)$, and 
    \begin{align}
        \bm{M} = &\bigoplus_{k=1}^{m-1}(\bm{V}_k + \bm{V}_m) +\left[\bm{J}\otimes\left(\bm{V}_m + i\tfrac{\hbar}{2}\bm{\Omega}\right)+ \text{t.}\right].
        \label{eq:bargmann_invariant_matrix}
    \end{align}
    Here, $\text{t.}$ indicates the transpose of the previous terms inside the square brackets, $\bm{A}\otimes\bm{B}$ is the Kronecker product between $\bm{A}$ and $\bm{B}$, $\bm{\Omega}$ is the symplectic form, and the entries of $\bm{J}\in \mathbb{R}^{m-1, m-1}$ are defined as $J_{j,k}=1$ for $k>j$ and $J_{j,k}=0$ otherwise. Note that $\bm{z}\in \mathbb{R}^{2(m-1)M}$ and $\bm{M}\in \mathbb{R}^{2(m-1)M, 2(m-1)M}$, implying that the time complexity of computing Bargmann invariants scales polynomially, typically as $\mathcal{O}(m^3M^3)$. If $\bm{V}_k=\bm{V}$ for all $k$, and $\bm{V}$ corresponds to a pure state, $\det(\bm{M}/\hbar)=1$~\cite{xu2025bargmann}, which further speeds up the calculation. We can see that the time of computation of the metric scales polynomially with the number of modes and with the number of steps we want the algorithm to run for. Moreover, this calculation only requires knowledge of the vectors of first moments and covariance matrices of $|\psi\rangle\langle\psi|$ and $\hat{\varrho}$. 
    
    In Fig.~\ref{fig:td_gaussian_example}, we show some examples of the numerical computation of the trace distance using a Julia implementation of Lanczos algorithm~\footnote{This implementation is available at \url{https://github.com/polyquantique/trace_distance_lczs}}. In the single-mode case, we compare our results with those obtained by direct diagonalization of a matrix representation of $|\psi\rangle\langle\psi|-\hat{\varrho}$ in the Fock basis with cutoff $c=100$. We notice a great agreement between the two techniques, even when running Lanczos algorithm for a few number of steps ($\ell = 10$). For the multimode case, we make a comparison with several of the bounds reported in Ref.~\cite{weedbrook2012gaussian}.

    \begin{figure}[!t]
        {
        \includegraphics[width=\columnwidth]{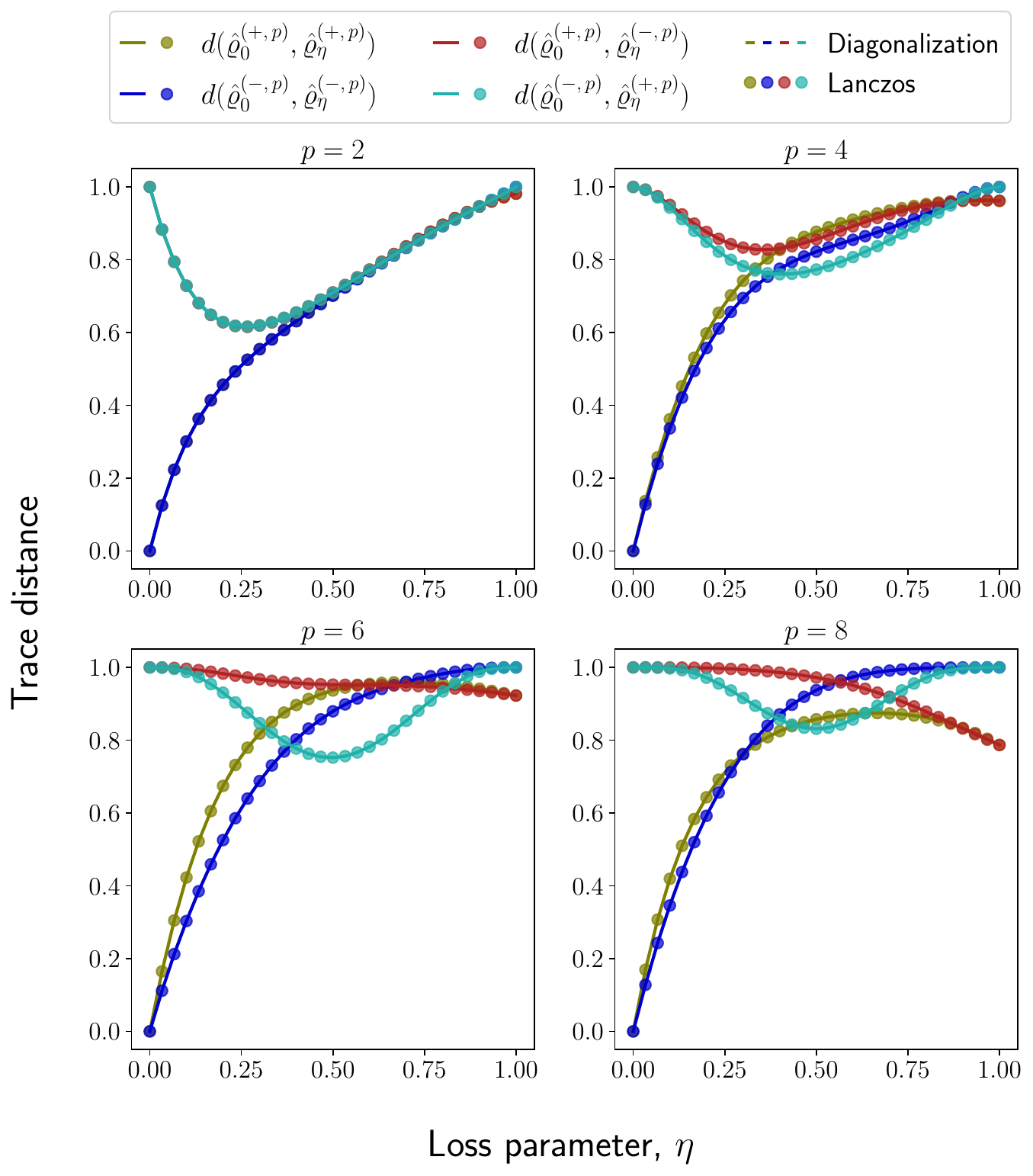}%
        }
        \caption{Trace distance between the single-mode \textit{cat states} $|\mathcal{C}_{\pm}^{(p)}\rangle = (\sqrt{N_{\pm}})^{-1/2}\sum_{j=0}^{p-1}(\pm 1)^{j}|\alpha e^{i2\pi j/p}\rangle$, and the mixed states $\hat{\varrho}_\eta^{(\pm,\,p)}=\sum_{j,k=0}^{p-1}b_{j,k}^{\pm}|\sqrt{1-\eta}\alpha e^{i2\pi j/p}\rangle\langle \sqrt{1-\eta}\alpha e^{i2\pi k/p}|$, with $b_{j,k}^{\pm} = (\pm1)^{j+k}e^{-\eta|\alpha|^2}\exp(\eta|\alpha|^2e^{i2\pi (j-k)/p})/N_{\pm}$, which are obtained after sending the cat states through a loss channel, $\mathcal{L}_\eta[\cdot]$, with loss parameter $\eta$, i.e., $\hat{\varrho}_\eta^{(\pm,\,p)}=\mathcal{L}_{\eta}[|\mathcal{C}_{\pm}^{(p)}\rangle\langle\mathcal{C}_{\pm}^{(p)}|]$. 
        Here, $|\alpha\rangle$ is a coherent state, $\bm{r}=\sqrt{2\hbar}(\mathrm{Re}(\alpha), \mathrm{Im}(\alpha))$ and $\bm{V}=\frac{\hbar}{2}\mathbb{I}_2$, and $N_{\pm}$ are normalization constants. The estimated results are shown as a function of $\eta\in[0.0, 1.0]$ for $\alpha = 2.0$, and $p\in\{2, 4, 6, 8\}$. Solid lines correspond to the diagonalization of $\hat{\varrho}_0^{(\pm,\,p)}-\hat{\varrho}_\eta^{(\pm,\,p)}$ in Fock basis with cutoff $c=100$. Circles correspond to an estimation running Lanczos algorithm for $\ell = 10$ steps.}
        \label{fig:td_cats_example}
    \end{figure}

    The efficient computation of $\bm{\mathcal{G}}$ and $\bm{\mathcal{G}}'$ can also be done for the situation in which $|\psi\rangle\langle\psi|$ and $\hat{\varrho}$ can be written as \textit{linear combinations} of outer products of pure Gaussian states. This type of states are known to efficiently simulate non-Gaussian states of interest to the field of quantum information processing, such as Fock states, multi-component cat states, and GKP states~\cite{bourassa2021fast, solodovnikova2025fast, albert2018performance}. More precisely, suppose that $|\psi\rangle = \sum_{j=1}^p a_j|g_j\rangle$ and $\hat{\varrho}=\sum_{j,k = 1}^qb_{j,k}|f_j\rangle\langle f_k|$, with $a_j, b_{j,k}\in \mathbb{C}$ and $|g_j\rangle$, $|f_j\rangle$ pure Gaussian states. We assume that the $\{a_j\}$, $\{b_{j,k}\}$ are chosen so that $|\psi\rangle$, $\hat{\varrho}$ are normalized and, without loss of generality, that the vacuum amplitudes of the Gaussian states satisfy $\langle0|g_j\rangle\geq 0$, $\langle0|f_j\rangle\geq 0$ for all $j$~\cite{dias2024classical,quesada2025s,yao2024riemannian, sun2026representation}. 

    For these choices of $|\psi\rangle$ and $\hat{\varrho}$, it holds that
    \begin{align}
        \langle\psi|\hat{\varrho}^\ell|\psi\rangle = \sum_{j,k=1}^p\sum_{m,n=1}^qa_j^*a_kb_{m,n}^{(\ell)}\langle g_j|f_m\rangle\langle f_n|g_k\rangle,
        \label{eq:linear_comb_metric}
    \end{align}
    where the coefficients $b_{m,n}^{(\ell)}$ can be recursively computed as
    \begin{align}
        b_{m,n}^{(\ell)} = \sum_{j,k=1}^qb_{m,j}b_{k,n}^{(\ell-1)}\langle f_j|f_k\rangle,
        \label{eq:recursive_power_expectation_coeffs}
    \end{align}
    and the inner products $\langle f_j|g_k\rangle$, $\langle f_j|f_k\rangle$ can be efficiently obtained using the \textit{Bargmann representation} of Gaussian states~\cite{yao2024riemannian}. These expressions show that the time of computation of $\bm{\mathcal{G}}$ and $\bm{\mathcal{G}}'$ grows polynomially with $p$, $q$, the number of modes in the system, and the number of steps Lanczos algorithm runs for. This time complexity can be explained by noticing that for any $\ell$, the range of $\hat{\varrho}^\ell$ is $\mathrm{span}(\{|f_j\rangle\}_{j=1}^q)$, so there is no need of exploring a different subspace of $\mathcal{H}$ at every iteration of the algorithm.

    In Fig.~\ref{fig:td_cats_example}, we show an example of the computation of the trace distance between linear combinations of outer products of Gaussian states. Namely, we estimate the trace distance between pure and lossy \textit{multi-component} cat states~\cite{bergmann2016quantum, boon2026generalised}. We compare our results, running the algorithm for $\ell = 10$ steps, with those obtained by direct diagonalization of $|\psi\rangle\langle\psi|-\hat{\varrho}$ in Fock basis with cutoff $c=100$, and observe a great agreement between the two methods.  

    One may also extend our techniques to the case in which the mixed state $\hat{\varrho}$ can be written as $\hat{\varrho}=\sum_{j=1}^q b_j\hat{\nu}_j$, where $b_j\in \mathbb{C}$ and $\hat{\nu}_j$ is a \textit{product} of mixed Gaussian states. In this case the time of computation of the metric grows exponentially with the number of steps that the algorithm runs for (see Supplemental Material for details).
    
    \begin{figure}[!t]
        {
        \includegraphics[width=\columnwidth]{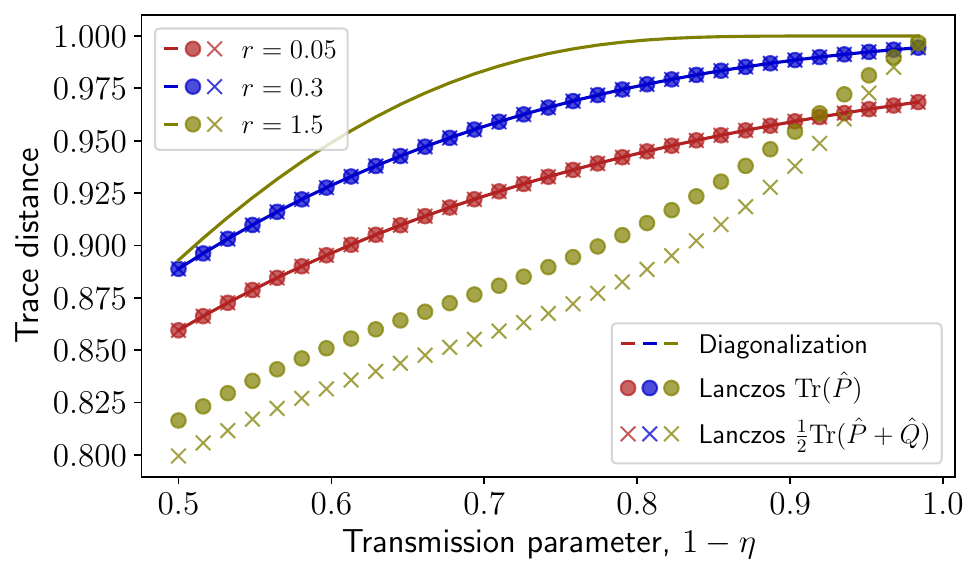}%
        \\
        \includegraphics[width=\columnwidth]{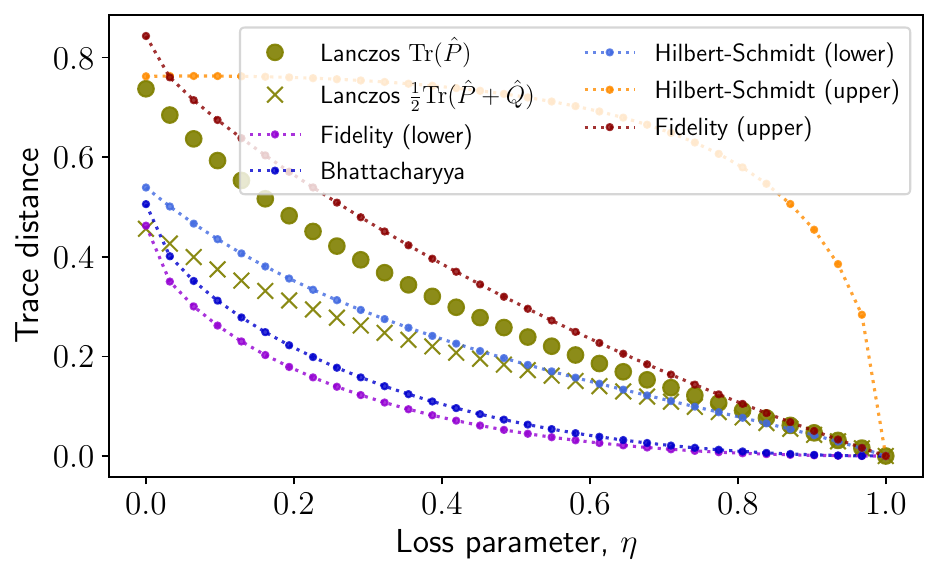}%
        }
        \caption{Illustration of the lower bounds on the trace distance obtained by extending Lanczos algorithm to the case in which both Gaussian states are mixed. \textit{(Top)} Trace distance between the states $\hat{\varrho}_{\pm}=\mathcal{L}_\eta[\hat{D}(\pm\alpha)\hat{S}(r)|0\rangle\langle 0|\hat{S}^\dagger(r)\hat{D}^\dagger(\pm\alpha)]$, where $\mathcal{L}_\eta[\cdot]$ is a loss channel with transmission $1-\eta$, $\hat{D}(\alpha)$ is a displacement operator, and $\hat{S}(r)$ is a squeezing operator. The parametrization of these Gaussian states is $\bm{r}_{\pm}=\pm\sqrt{2\hbar(1-\eta)}(\mathrm{Re}(\alpha), \mathrm{Im}(\alpha))$ and $\bm{V}_{\pm} = \frac{\hbar}{2}(1-\eta)\big(\begin{smallmatrix}e^{-2r}&0\\0&e^{2r}\end{smallmatrix}\big)+\frac{\hbar}{2}\eta\mathbb{I}_2$. The estimated results are shown as a function of $\eta\in[0.5, 1.0]$ for $\alpha = 0.8$, and $r\in\{0.05, 0.3, 1.5\}$ (or, equivalently, squeezing values of $\{0.43\,\text{dB}, 2.61\,\text{dB}, 13.03\,\text{dB}\}$). Solid lines correspond to the diagonalization of $\hat{\varrho}_+-\hat{\varrho}_-$ in Fock basis with cutoff $c=100$. Circles and crosses correspond to an estimation running Lanczos algorithm for $\ell = 5$ steps using a Gaussian trial vector with $\bm{r}=(1.5, 1.5)$ and $\bm{V}=\frac{\hbar}{2}\mathbb{I}_2$. \textit{(Bottom)} Trace distance between the $(M=5)$-mode mixed Gaussian state parametrized by $\bm{r}_1=\bm{0}$ and $\bm{V}_1=\frac{\hbar}{2}(\mathbb{I}_M)\oplus([(1-\eta)(1 + 4 \sinh^2(s)) + \eta]\mathbb{I}_M)$, and the mixed state represented by $\bm{r}_2=\bm{0}$ and $\bm{V}_2=\frac{\hbar}{2}([(1-\eta)e^{-2s}+\eta]\mathbb{I}_M)\oplus([(1-\eta)e^{2s}+\eta]\mathbb{I}_M)$. Results are presented as a function of the loss parameter $\eta\in[0.0,1.0]$, for $s=0.5$. Circles and crosses correspond to an estimation running Lanczos algorithm for $\ell = 4$ steps using a Gaussian trial vector with $\bm{r}=\bm{1}_{2M}$ and $\bm{V}=\frac{\hbar}{2}\mathbb{I}_{2M}$, where $\bm{1}_{2M}$ is a $2M$-length vector whose components are all equal to 1. Several bounds on the trace distance~\cite{weedbrook2012gaussian} are shown in dotted lines.}
        \label{fig:td_gaussian_mixed}
    \end{figure}

    \begin{figure}[!t]
        {
        \includegraphics[width=\columnwidth]{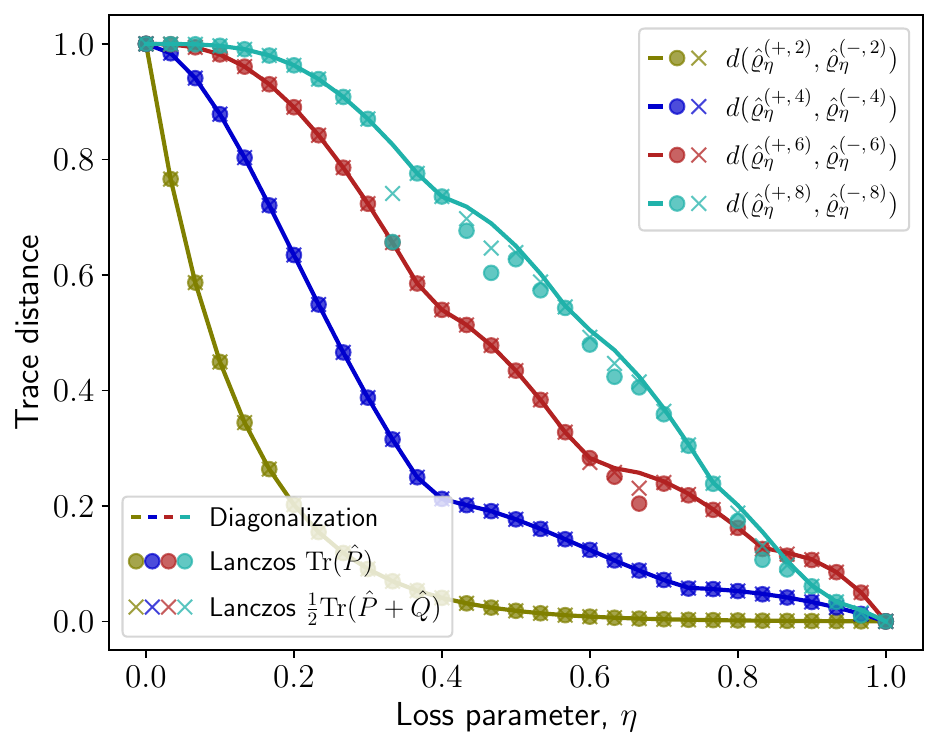}%
        }
        \caption{Illustration of the lower bounds on the trace distance obtained by extending Lanczos algorithm to the case in which both mixed states can be written as linear combinations of outer products of pure Gaussian states. We estimate the trace distance between the mixed states $\hat{\varrho}_\eta^{(\pm,\,p)}=\sum_{j,k=0}^{p-1}b_{j,k}^{\pm}|\sqrt{1-\eta}\alpha e^{i2\pi j/p}\rangle\langle \sqrt{1-\eta}\alpha e^{i2\pi k/p}|$, where $b_{j,k}^{\pm} = (\pm1)^{j+k}e^{-\eta|\alpha|^2}\exp(\eta|\alpha|^2e^{i2\pi (j-k)/p})/N_{\pm}$. Here, $|\alpha\rangle$ is a coherent state, $\bm{r}=\sqrt{2\hbar}(\mathrm{Re}(\alpha), \mathrm{Im}(\alpha))$ and $\bm{V}=\frac{\hbar}{2}\mathbb{I}_2$, $\eta$ is a loss parameter, and $N_{\pm}$ are normalization constants. The results are shown as a function of $\eta\in[0.0, 1.0]$ for $\alpha = 2.0$, and $p\in\{2, 4, 6, 8\}$. Solid lines correspond to the diagonalization of $\hat{\varrho}_\eta^{(+,\,p)}-\hat{\varrho}_\eta^{(-,\,p)}$ in Fock basis with cutoff $c=100$. Circles and crosses correspond to an estimation running Lanczos algorithm for $\ell = 10$ steps using $|\alpha\rangle$ as trial vector.}
        \label{fig:td_cat_mixed}
    \end{figure}
    
    To conclude our discussion, let us note that one can also use the Lanczos algorithm to compute $d(\hat{\varrho}_1,\hat{\varrho}_2)$, with $\hat{\varrho}_1$ and $\hat{\varrho}_2$ mixed Gaussian states,  by letting $\hat{A}=\Delta\hat{\varrho}=\hat{\varrho}_1-\hat{\varrho}_2$, and by choosing an appropriate Gaussian pure state $|c\rangle$. This choice leads to recursive relations similar to Eqs.~\eqref{eq:recurrence_matrix_C} to~\eqref{eq:tridiagonal_entries}, in which the entries of the metric take the form
    \begin{equation}
        \langle c|(\Delta\hat{\varrho})^\ell|c\rangle = \sum_{\bm{u}\in \{1,2\}^\ell}(-1)^{|\bm{u}|}\mathrm{Tr}(\hat{\varrho}_{u_1}\cdots\hat{\varrho}_{u_\ell}|c\rangle\langle c|),
        \label{eq:metric_mixed_states}
    \end{equation}
    where $|\bm{u}|=\sum_{k=1}^\ell(u_k-1)$. These entries can be computed using Bargmann invariants, but the construction of the whole metric will have a time of computation that grows as $\mathcal{O}(2^\ell)$. Lanczos algorithm approximates the eigenvalues at the edges of the spectrum very quickly, but needs to be run for a significant number of steps in order to obtain an accurate estimation of the remaining eigenvalues, which are needed to obtain either $\mathrm{Tr}(\hat{P})$ or $\frac{1}{2}\mathrm{Tr}(\hat{P}+\hat{Q})$. Thus, the computation of $d(\hat{\varrho}_1,\hat{\varrho}_2)$ using this technique becomes more complicated, and by running the algorithm for a reasonable number of steps one can only obtain a lower bound on the trace distance (see Fig.~\ref{fig:td_gaussian_mixed} for an illustration of this lower bound). This is because the magnitude of the eigenvalues is approximated from below~\cite{kreuzer1981lanczos}, and because we can only efficiently obtain a few of them.

    We can also consider the situation in which the mixed states $\hat{\varrho}_1$ and $\hat{\varrho}_2$ are linear combinations of outer products of pure Gaussian states, so that their difference takes the form $\Delta\hat{\varrho} = \sum_{j,k=1}^qb_{j,k}|f_j\rangle\langle f_k|$. In this case, $\langle c|(\Delta\hat{\varrho})^\ell|c\rangle$ can be computed using Eqs.~\eqref{eq:linear_comb_metric} and~\eqref{eq:recursive_power_expectation_coeffs}, replacing $|\psi\rangle$ by $|c\rangle$. This significantly reduces the time complexity of obtaining the metric. However, we still only obtain a lower bound on the trace distance because even if we obtain a large number of the eigenvalues needed, all of them will be approximated from below in absolute value, most of them (those far from the edges) without good enough accuracy (see Fig.~\ref{fig:td_cat_mixed} for an illustration of the lower bound in this situation). 
    
    \textit{Conclusion}--- In this work we have presented a numerical technique for the computation of the trace distance between a pure and a mixed Gaussian state. This method avoids the diagonalization of approximate matrix representations of density operators, depends only on the parametrization of Gaussian states in terms of covariance matrices and vectors of first moments, and its time complexity scales polynomially with the number of modes of the CV system under consideration. Our technique, which is based on the well-known Lanczos algorithm for the estimation of eigenvalues, significantly simplifies the cost of computation of trace distances between Gaussian states, and we believe it will become a widely used tool in the fields of quantum information processing and quantum computing.

    We also proved that our results can be extended to the case in which the states involved can be expressed as linear combinations of outer products of pure Gaussian states. This opens the door to the computation of trace distances for a wide range of bosonic states of interest. Future work will be devoted to investigating whether the algorithm can be adjusted to include more general combinations of Gaussian states.

    Finally, we have shown that our methods provide a lower bound on the trace distance between mixed Gaussian states. The challenge of turning this lower bound into a more accurate estimation of the trace distance will rely on developing more efficient algorithms for the computation of the corresponding metric, as well as on finding novel Krylov subspace techniques to better approximate the trace of the positive part of Hermitian operators. We will tackle these challenges in future work.

    \emph{Acknowledgments}---
        J.M.-C. and N.Q. acknowledge the support from the Ministère de l’Économie et de l’Innovation du Québec and the Natural Sciences and Engineering Research Council of Canada. They also thank S. Molesky and P. Virally for insightful discussions.
    
    \bibliography{bib.bib}

\begin{thebibliography}{69}%
\makeatletter
\providecommand \@ifxundefined [1]{%
 \@ifx{#1\undefined}
}%
\providecommand \@ifnum [1]{%
 \ifnum #1\expandafter \@firstoftwo
 \else \expandafter \@secondoftwo
 \fi
}%
\providecommand \@ifx [1]{%
 \ifx #1\expandafter \@firstoftwo
 \else \expandafter \@secondoftwo
 \fi
}%
\providecommand \natexlab [1]{#1}%
\providecommand \enquote  [1]{``#1''}%
\providecommand \bibnamefont  [1]{#1}%
\providecommand \bibfnamefont [1]{#1}%
\providecommand \citenamefont [1]{#1}%
\providecommand \href@noop [0]{\@secondoftwo}%
\providecommand \href [0]{\begingroup \@sanitize@url \@href}%
\providecommand \@href[1]{\@@startlink{#1}\@@href}%
\providecommand \@@href[1]{\endgroup#1\@@endlink}%
\providecommand \@sanitize@url [0]{\catcode `\\12\catcode `\$12\catcode `\&12\catcode `\#12\catcode `\^12\catcode `\_12\catcode `\%12\relax}%
\providecommand \@@startlink[1]{}%
\providecommand \@@endlink[0]{}%
\providecommand \url  [0]{\begingroup\@sanitize@url \@url }%
\providecommand \@url [1]{\endgroup\@href {#1}{\urlprefix }}%
\providecommand \urlprefix  [0]{URL }%
\providecommand \Eprint [0]{\href }%
\providecommand \doibase [0]{https://doi.org/}%
\providecommand \selectlanguage [0]{\@gobble}%
\providecommand \bibinfo  [0]{\@secondoftwo}%
\providecommand \bibfield  [0]{\@secondoftwo}%
\providecommand \translation [1]{[#1]}%
\providecommand \BibitemOpen [0]{}%
\providecommand \bibitemStop [0]{}%
\providecommand \bibitemNoStop [0]{.\EOS\space}%
\providecommand \EOS [0]{\spacefactor3000\relax}%
\providecommand \BibitemShut  [1]{\csname bibitem#1\endcsname}%
\let\auto@bib@innerbib\@empty
\bibitem [{\citenamefont {Serafini}(2023)}]{serafini2023quantum}%
  \BibitemOpen
  \bibfield  {author} {\bibinfo {author} {\bibfnamefont {A.}~\bibnamefont {Serafini}},\ }\href@noop {} {\emph {\bibinfo {title} {Quantum continuous variables: a primer of theoretical methods}}}\ (\bibinfo  {publisher} {CRC press},\ \bibinfo {year} {2023})\BibitemShut {NoStop}%
\bibitem [{\citenamefont {Cerf}\ \emph {et~al.}(2007)\citenamefont {Cerf}, \citenamefont {Leuchs},\ and\ \citenamefont {Polzik}}]{cerf2007quantum}%
  \BibitemOpen
  \bibfield  {author} {\bibinfo {author} {\bibfnamefont {N.~J.}\ \bibnamefont {Cerf}}, \bibinfo {author} {\bibfnamefont {G.}~\bibnamefont {Leuchs}},\ and\ \bibinfo {author} {\bibfnamefont {E.~S.}\ \bibnamefont {Polzik}},\ }\href@noop {} {\emph {\bibinfo {title} {Quantum information with continuous variables of atoms and light}}}\ (\bibinfo  {publisher} {World Scientific},\ \bibinfo {year} {2007})\BibitemShut {NoStop}%
\bibitem [{\citenamefont {Usenko}\ \emph {et~al.}(2025)\citenamefont {Usenko}, \citenamefont {Ac{\'\i}n}, \citenamefont {All{\'e}aume}, \citenamefont {Andersen}, \citenamefont {Diamanti}, \citenamefont {Gehring}, \citenamefont {Hajomer}, \citenamefont {Kanitschar}, \citenamefont {Pacher}, \citenamefont {Pirandola} \emph {et~al.}}]{usenko2025continuous}%
  \BibitemOpen
  \bibfield  {author} {\bibinfo {author} {\bibfnamefont {V.~C.}\ \bibnamefont {Usenko}}, \bibinfo {author} {\bibfnamefont {A.}~\bibnamefont {Ac{\'\i}n}}, \bibinfo {author} {\bibfnamefont {R.}~\bibnamefont {All{\'e}aume}}, \bibinfo {author} {\bibfnamefont {U.~L.}\ \bibnamefont {Andersen}}, \bibinfo {author} {\bibfnamefont {E.}~\bibnamefont {Diamanti}}, \bibinfo {author} {\bibfnamefont {T.}~\bibnamefont {Gehring}}, \bibinfo {author} {\bibfnamefont {A.~A.}\ \bibnamefont {Hajomer}}, \bibinfo {author} {\bibfnamefont {F.}~\bibnamefont {Kanitschar}}, \bibinfo {author} {\bibfnamefont {C.}~\bibnamefont {Pacher}}, \bibinfo {author} {\bibfnamefont {S.}~\bibnamefont {Pirandola}}, \emph {et~al.},\ }\bibfield  {title} {\bibinfo {title} {Continuous-variable quantum communication},\ }\href@noop {} {\bibfield  {journal} {\bibinfo  {journal} {arXiv preprint arXiv:2501.12801}\ } (\bibinfo {year} {2025})}\BibitemShut {NoStop}%
\bibitem [{\citenamefont {Jain}\ \emph {et~al.}(2022)\citenamefont {Jain}, \citenamefont {Chin}, \citenamefont {Mani}, \citenamefont {Lupo}, \citenamefont {Nikolic}, \citenamefont {Kordts}, \citenamefont {Pirandola}, \citenamefont {Pedersen}, \citenamefont {Kolb}, \citenamefont {{\"O}mer} \emph {et~al.}}]{jain2022practical}%
  \BibitemOpen
  \bibfield  {author} {\bibinfo {author} {\bibfnamefont {N.}~\bibnamefont {Jain}}, \bibinfo {author} {\bibfnamefont {H.-M.}\ \bibnamefont {Chin}}, \bibinfo {author} {\bibfnamefont {H.}~\bibnamefont {Mani}}, \bibinfo {author} {\bibfnamefont {C.}~\bibnamefont {Lupo}}, \bibinfo {author} {\bibfnamefont {D.~S.}\ \bibnamefont {Nikolic}}, \bibinfo {author} {\bibfnamefont {A.}~\bibnamefont {Kordts}}, \bibinfo {author} {\bibfnamefont {S.}~\bibnamefont {Pirandola}}, \bibinfo {author} {\bibfnamefont {T.~B.}\ \bibnamefont {Pedersen}}, \bibinfo {author} {\bibfnamefont {M.}~\bibnamefont {Kolb}}, \bibinfo {author} {\bibfnamefont {B.}~\bibnamefont {{\"O}mer}}, \emph {et~al.},\ }\bibfield  {title} {\bibinfo {title} {Practical continuous-variable quantum key distribution with composable security},\ }\href@noop {} {\bibfield  {journal} {\bibinfo  {journal} {Nature communications}\ }\textbf {\bibinfo {volume} {13}},\ \bibinfo {pages} {4740} (\bibinfo {year} {2022})}\BibitemShut {NoStop}%
\bibitem [{\citenamefont {Madsen}\ \emph {et~al.}(2012)\citenamefont {Madsen}, \citenamefont {Usenko}, \citenamefont {Lassen}, \citenamefont {Filip},\ and\ \citenamefont {Andersen}}]{madsen2012continuous}%
  \BibitemOpen
  \bibfield  {author} {\bibinfo {author} {\bibfnamefont {L.~S.}\ \bibnamefont {Madsen}}, \bibinfo {author} {\bibfnamefont {V.~C.}\ \bibnamefont {Usenko}}, \bibinfo {author} {\bibfnamefont {M.}~\bibnamefont {Lassen}}, \bibinfo {author} {\bibfnamefont {R.}~\bibnamefont {Filip}},\ and\ \bibinfo {author} {\bibfnamefont {U.~L.}\ \bibnamefont {Andersen}},\ }\bibfield  {title} {\bibinfo {title} {Continuous variable quantum key distribution with modulated entangled states},\ }\href@noop {} {\bibfield  {journal} {\bibinfo  {journal} {Nature communications}\ }\textbf {\bibinfo {volume} {3}},\ \bibinfo {pages} {1083} (\bibinfo {year} {2012})}\BibitemShut {NoStop}%
\bibitem [{\citenamefont {Vaidman}(1994)}]{vaidman1994teleportation}%
  \BibitemOpen
  \bibfield  {author} {\bibinfo {author} {\bibfnamefont {L.}~\bibnamefont {Vaidman}},\ }\bibfield  {title} {\bibinfo {title} {Teleportation of quantum states},\ }\href@noop {} {\bibfield  {journal} {\bibinfo  {journal} {Phys. Rev. A}\ }\textbf {\bibinfo {volume} {49}},\ \bibinfo {pages} {1473} (\bibinfo {year} {1994})}\BibitemShut {NoStop}%
\bibitem [{\citenamefont {Fadel}\ \emph {et~al.}(2025)\citenamefont {Fadel}, \citenamefont {Roux},\ and\ \citenamefont {Gessner}}]{fadel2025quantum}%
  \BibitemOpen
  \bibfield  {author} {\bibinfo {author} {\bibfnamefont {M.}~\bibnamefont {Fadel}}, \bibinfo {author} {\bibfnamefont {N.}~\bibnamefont {Roux}},\ and\ \bibinfo {author} {\bibfnamefont {M.}~\bibnamefont {Gessner}},\ }\bibfield  {title} {\bibinfo {title} {Quantum metrology with a continuous-variable system},\ }\href@noop {} {\bibfield  {journal} {\bibinfo  {journal} {Reports on Progress in Physics}\ }\textbf {\bibinfo {volume} {88}},\ \bibinfo {pages} {106001} (\bibinfo {year} {2025})}\BibitemShut {NoStop}%
\bibitem [{\citenamefont {Aasi}\ \emph {et~al.}(2013)\citenamefont {Aasi}, \citenamefont {Abadie}, \citenamefont {Abbott}, \citenamefont {Abbott}, \citenamefont {Abbott}, \citenamefont {Abernathy}, \citenamefont {Adams}, \citenamefont {Adams}, \citenamefont {Addesso}, \citenamefont {Adhikari} \emph {et~al.}}]{aasi2013enhanced}%
  \BibitemOpen
  \bibfield  {author} {\bibinfo {author} {\bibfnamefont {J.}~\bibnamefont {Aasi}}, \bibinfo {author} {\bibfnamefont {J.}~\bibnamefont {Abadie}}, \bibinfo {author} {\bibfnamefont {B.}~\bibnamefont {Abbott}}, \bibinfo {author} {\bibfnamefont {R.}~\bibnamefont {Abbott}}, \bibinfo {author} {\bibfnamefont {T.}~\bibnamefont {Abbott}}, \bibinfo {author} {\bibfnamefont {M.}~\bibnamefont {Abernathy}}, \bibinfo {author} {\bibfnamefont {C.}~\bibnamefont {Adams}}, \bibinfo {author} {\bibfnamefont {T.}~\bibnamefont {Adams}}, \bibinfo {author} {\bibfnamefont {P.}~\bibnamefont {Addesso}}, \bibinfo {author} {\bibfnamefont {R.}~\bibnamefont {Adhikari}}, \emph {et~al.},\ }\bibfield  {title} {\bibinfo {title} {Enhanced sensitivity of the ligo gravitational wave detector by using squeezed states of light},\ }\href@noop {} {\bibfield  {journal} {\bibinfo  {journal} {Nature Photonics}\ }\textbf {\bibinfo {volume} {7}},\ \bibinfo {pages} {613} (\bibinfo {year} {2013})}\BibitemShut {NoStop}%
\bibitem [{\citenamefont {Meyer}\ \emph {et~al.}(2001)\citenamefont {Meyer}, \citenamefont {Rowe}, \citenamefont {Kielpinski}, \citenamefont {Sackett}, \citenamefont {Itano}, \citenamefont {Monroe},\ and\ \citenamefont {Wineland}}]{meyer2001experimental}%
  \BibitemOpen
  \bibfield  {author} {\bibinfo {author} {\bibfnamefont {V.}~\bibnamefont {Meyer}}, \bibinfo {author} {\bibfnamefont {M.}~\bibnamefont {Rowe}}, \bibinfo {author} {\bibfnamefont {D.}~\bibnamefont {Kielpinski}}, \bibinfo {author} {\bibfnamefont {C.}~\bibnamefont {Sackett}}, \bibinfo {author} {\bibfnamefont {W.~M.}\ \bibnamefont {Itano}}, \bibinfo {author} {\bibfnamefont {C.}~\bibnamefont {Monroe}},\ and\ \bibinfo {author} {\bibfnamefont {D.~J.}\ \bibnamefont {Wineland}},\ }\bibfield  {title} {\bibinfo {title} {Experimental demonstration of entanglement-enhanced rotation angle estimation using trapped ions},\ }\href@noop {} {\bibfield  {journal} {\bibinfo  {journal} {Physical review letters}\ }\textbf {\bibinfo {volume} {86}},\ \bibinfo {pages} {5870} (\bibinfo {year} {2001})}\BibitemShut {NoStop}%
\bibitem [{\citenamefont {McCormick}\ \emph {et~al.}(2019)\citenamefont {McCormick}, \citenamefont {Keller}, \citenamefont {Burd}, \citenamefont {Wineland}, \citenamefont {Wilson},\ and\ \citenamefont {Leibfried}}]{mccormick2019quantum}%
  \BibitemOpen
  \bibfield  {author} {\bibinfo {author} {\bibfnamefont {K.~C.}\ \bibnamefont {McCormick}}, \bibinfo {author} {\bibfnamefont {J.}~\bibnamefont {Keller}}, \bibinfo {author} {\bibfnamefont {S.~C.}\ \bibnamefont {Burd}}, \bibinfo {author} {\bibfnamefont {D.~J.}\ \bibnamefont {Wineland}}, \bibinfo {author} {\bibfnamefont {A.~C.}\ \bibnamefont {Wilson}},\ and\ \bibinfo {author} {\bibfnamefont {D.}~\bibnamefont {Leibfried}},\ }\bibfield  {title} {\bibinfo {title} {Quantum-enhanced sensing of a single-ion mechanical oscillator},\ }\href@noop {} {\bibfield  {journal} {\bibinfo  {journal} {Nature}\ }\textbf {\bibinfo {volume} {572}},\ \bibinfo {pages} {86} (\bibinfo {year} {2019})}\BibitemShut {NoStop}%
\bibitem [{\citenamefont {Braunstein}\ and\ \citenamefont {Van~Loock}(2005)}]{braunstein2005quantum}%
  \BibitemOpen
  \bibfield  {author} {\bibinfo {author} {\bibfnamefont {S.~L.}\ \bibnamefont {Braunstein}}\ and\ \bibinfo {author} {\bibfnamefont {P.}~\bibnamefont {Van~Loock}},\ }\bibfield  {title} {\bibinfo {title} {Quantum information with continuous variables},\ }\href@noop {} {\bibfield  {journal} {\bibinfo  {journal} {Reviews of modern physics}\ }\textbf {\bibinfo {volume} {77}},\ \bibinfo {pages} {513} (\bibinfo {year} {2005})}\BibitemShut {NoStop}%
\bibitem [{\citenamefont {Weedbrook}\ \emph {et~al.}(2012)\citenamefont {Weedbrook}, \citenamefont {Pirandola}, \citenamefont {Garc{\'\i}a-Patr{\'o}n}, \citenamefont {Cerf}, \citenamefont {Ralph}, \citenamefont {Shapiro},\ and\ \citenamefont {Lloyd}}]{weedbrook2012gaussian}%
  \BibitemOpen
  \bibfield  {author} {\bibinfo {author} {\bibfnamefont {C.}~\bibnamefont {Weedbrook}}, \bibinfo {author} {\bibfnamefont {S.}~\bibnamefont {Pirandola}}, \bibinfo {author} {\bibfnamefont {R.}~\bibnamefont {Garc{\'\i}a-Patr{\'o}n}}, \bibinfo {author} {\bibfnamefont {N.~J.}\ \bibnamefont {Cerf}}, \bibinfo {author} {\bibfnamefont {T.~C.}\ \bibnamefont {Ralph}}, \bibinfo {author} {\bibfnamefont {J.~H.}\ \bibnamefont {Shapiro}},\ and\ \bibinfo {author} {\bibfnamefont {S.}~\bibnamefont {Lloyd}},\ }\bibfield  {title} {\bibinfo {title} {Gaussian quantum information},\ }\href@noop {} {\bibfield  {journal} {\bibinfo  {journal} {Reviews of Modern Physics}\ }\textbf {\bibinfo {volume} {84}},\ \bibinfo {pages} {621} (\bibinfo {year} {2012})}\BibitemShut {NoStop}%
\bibitem [{\citenamefont {Andersen}\ \emph {et~al.}(2010)\citenamefont {Andersen}, \citenamefont {Leuchs},\ and\ \citenamefont {Silberhorn}}]{andersen2010continuous}%
  \BibitemOpen
  \bibfield  {author} {\bibinfo {author} {\bibfnamefont {U.~L.}\ \bibnamefont {Andersen}}, \bibinfo {author} {\bibfnamefont {G.}~\bibnamefont {Leuchs}},\ and\ \bibinfo {author} {\bibfnamefont {C.}~\bibnamefont {Silberhorn}},\ }\bibfield  {title} {\bibinfo {title} {Continuous-variable quantum information processing},\ }\href@noop {} {\bibfield  {journal} {\bibinfo  {journal} {Laser \& Photonics Reviews}\ }\textbf {\bibinfo {volume} {4}},\ \bibinfo {pages} {337} (\bibinfo {year} {2010})}\BibitemShut {NoStop}%
\bibitem [{\citenamefont {Asavanant}\ and\ \citenamefont {Furusawa}(2022)}]{asavanant2022optical}%
  \BibitemOpen
  \bibfield  {author} {\bibinfo {author} {\bibfnamefont {W.}~\bibnamefont {Asavanant}}\ and\ \bibinfo {author} {\bibfnamefont {A.}~\bibnamefont {Furusawa}},\ }\href@noop {} {\emph {\bibinfo {title} {Optical Quantum Computers: A Route to Practical Continuous Variable Quantum Information Processing}}}\ (\bibinfo  {publisher} {AIP Publishing LLC},\ \bibinfo {year} {2022})\BibitemShut {NoStop}%
\bibitem [{\citenamefont {{PsiQuantum Team}}(2025)}]{psiquantum2025manufacturable}%
  \BibitemOpen
  \bibfield  {author} {\bibinfo {author} {\bibnamefont {{PsiQuantum Team}}},\ }\bibfield  {title} {\bibinfo {title} {A manufacturable platform for photonic quantum computing},\ }\href@noop {} {\bibfield  {journal} {\bibinfo  {journal} {Nature}\ }\textbf {\bibinfo {volume} {641}},\ \bibinfo {pages} {876} (\bibinfo {year} {2025})}\BibitemShut {NoStop}%
\bibitem [{\citenamefont {Bartolucci}\ \emph {et~al.}(2023)\citenamefont {Bartolucci}, \citenamefont {Birchall}, \citenamefont {Bombin}, \citenamefont {Cable}, \citenamefont {Dawson}, \citenamefont {Gimeno-Segovia}, \citenamefont {Johnston}, \citenamefont {Kieling}, \citenamefont {Nickerson}, \citenamefont {Pant} \emph {et~al.}}]{bartolucci2023fusion}%
  \BibitemOpen
  \bibfield  {author} {\bibinfo {author} {\bibfnamefont {S.}~\bibnamefont {Bartolucci}}, \bibinfo {author} {\bibfnamefont {P.}~\bibnamefont {Birchall}}, \bibinfo {author} {\bibfnamefont {H.}~\bibnamefont {Bombin}}, \bibinfo {author} {\bibfnamefont {H.}~\bibnamefont {Cable}}, \bibinfo {author} {\bibfnamefont {C.}~\bibnamefont {Dawson}}, \bibinfo {author} {\bibfnamefont {M.}~\bibnamefont {Gimeno-Segovia}}, \bibinfo {author} {\bibfnamefont {E.}~\bibnamefont {Johnston}}, \bibinfo {author} {\bibfnamefont {K.}~\bibnamefont {Kieling}}, \bibinfo {author} {\bibfnamefont {N.}~\bibnamefont {Nickerson}}, \bibinfo {author} {\bibfnamefont {M.}~\bibnamefont {Pant}}, \emph {et~al.},\ }\bibfield  {title} {\bibinfo {title} {Fusion-based quantum computation},\ }\href@noop {} {\bibfield  {journal} {\bibinfo  {journal} {Nature Communications}\ }\textbf {\bibinfo {volume} {14}},\ \bibinfo {pages} {912} (\bibinfo {year} {2023})}\BibitemShut {NoStop}%
\bibitem [{\citenamefont {Bourassa}\ \emph {et~al.}(2021{\natexlab{a}})\citenamefont {Bourassa}, \citenamefont {Alexander}, \citenamefont {Vasmer}, \citenamefont {Patil}, \citenamefont {Tzitrin}, \citenamefont {Matsuura}, \citenamefont {Su}, \citenamefont {Baragiola}, \citenamefont {Guha}, \citenamefont {Dauphinais} \emph {et~al.}}]{bourassa2021blueprint}%
  \BibitemOpen
  \bibfield  {author} {\bibinfo {author} {\bibfnamefont {J.~E.}\ \bibnamefont {Bourassa}}, \bibinfo {author} {\bibfnamefont {R.~N.}\ \bibnamefont {Alexander}}, \bibinfo {author} {\bibfnamefont {M.}~\bibnamefont {Vasmer}}, \bibinfo {author} {\bibfnamefont {A.}~\bibnamefont {Patil}}, \bibinfo {author} {\bibfnamefont {I.}~\bibnamefont {Tzitrin}}, \bibinfo {author} {\bibfnamefont {T.}~\bibnamefont {Matsuura}}, \bibinfo {author} {\bibfnamefont {D.}~\bibnamefont {Su}}, \bibinfo {author} {\bibfnamefont {B.~Q.}\ \bibnamefont {Baragiola}}, \bibinfo {author} {\bibfnamefont {S.}~\bibnamefont {Guha}}, \bibinfo {author} {\bibfnamefont {G.}~\bibnamefont {Dauphinais}}, \emph {et~al.},\ }\bibfield  {title} {\bibinfo {title} {Blueprint for a scalable photonic fault-tolerant quantum computer},\ }\href@noop {} {\bibfield  {journal} {\bibinfo  {journal} {Quantum}\ }\textbf {\bibinfo {volume} {5}},\ \bibinfo {pages} {392} (\bibinfo {year} {2021}{\natexlab{a}})}\BibitemShut {NoStop}%
\bibitem [{\citenamefont {Andersen}\ \emph {et~al.}(2015)\citenamefont {Andersen}, \citenamefont {Neergaard-Nielsen}, \citenamefont {Van~Loock},\ and\ \citenamefont {Furusawa}}]{andersen2015hybrid}%
  \BibitemOpen
  \bibfield  {author} {\bibinfo {author} {\bibfnamefont {U.~L.}\ \bibnamefont {Andersen}}, \bibinfo {author} {\bibfnamefont {J.~S.}\ \bibnamefont {Neergaard-Nielsen}}, \bibinfo {author} {\bibfnamefont {P.}~\bibnamefont {Van~Loock}},\ and\ \bibinfo {author} {\bibfnamefont {A.}~\bibnamefont {Furusawa}},\ }\bibfield  {title} {\bibinfo {title} {Hybrid discrete-and continuous-variable quantum information},\ }\href@noop {} {\bibfield  {journal} {\bibinfo  {journal} {Nature Physics}\ }\textbf {\bibinfo {volume} {11}},\ \bibinfo {pages} {713} (\bibinfo {year} {2015})}\BibitemShut {NoStop}%
\bibitem [{\citenamefont {Menicucci}(2014)}]{menicucci2014fault}%
  \BibitemOpen
  \bibfield  {author} {\bibinfo {author} {\bibfnamefont {N.~C.}\ \bibnamefont {Menicucci}},\ }\bibfield  {title} {\bibinfo {title} {Fault-tolerant measurement-based quantum computing with continuous-variable cluster states},\ }\href@noop {} {\bibfield  {journal} {\bibinfo  {journal} {Physical review letters}\ }\textbf {\bibinfo {volume} {112}},\ \bibinfo {pages} {120504} (\bibinfo {year} {2014})}\BibitemShut {NoStop}%
\bibitem [{\citenamefont {Knill}\ \emph {et~al.}(2001)\citenamefont {Knill}, \citenamefont {Laflamme},\ and\ \citenamefont {Milburn}}]{knill2001scheme}%
  \BibitemOpen
  \bibfield  {author} {\bibinfo {author} {\bibfnamefont {E.}~\bibnamefont {Knill}}, \bibinfo {author} {\bibfnamefont {R.}~\bibnamefont {Laflamme}},\ and\ \bibinfo {author} {\bibfnamefont {G.~J.}\ \bibnamefont {Milburn}},\ }\bibfield  {title} {\bibinfo {title} {A scheme for efficient quantum computation with linear optics},\ }\href@noop {} {\bibfield  {journal} {\bibinfo  {journal} {nature}\ }\textbf {\bibinfo {volume} {409}},\ \bibinfo {pages} {46} (\bibinfo {year} {2001})}\BibitemShut {NoStop}%
\bibitem [{\citenamefont {Menicucci}\ \emph {et~al.}(2006)\citenamefont {Menicucci}, \citenamefont {Van~Loock}, \citenamefont {Gu}, \citenamefont {Weedbrook}, \citenamefont {Ralph},\ and\ \citenamefont {Nielsen}}]{menicucci2006universal}%
  \BibitemOpen
  \bibfield  {author} {\bibinfo {author} {\bibfnamefont {N.~C.}\ \bibnamefont {Menicucci}}, \bibinfo {author} {\bibfnamefont {P.}~\bibnamefont {Van~Loock}}, \bibinfo {author} {\bibfnamefont {M.}~\bibnamefont {Gu}}, \bibinfo {author} {\bibfnamefont {C.}~\bibnamefont {Weedbrook}}, \bibinfo {author} {\bibfnamefont {.~f. T.~C.}\ \bibnamefont {Ralph}},\ and\ \bibinfo {author} {\bibfnamefont {M.~A.}\ \bibnamefont {Nielsen}},\ }\bibfield  {title} {\bibinfo {title} {Universal quantum computation with continuous-variable cluster states},\ }\href@noop {} {\bibfield  {journal} {\bibinfo  {journal} {Physical review letters}\ }\textbf {\bibinfo {volume} {97}},\ \bibinfo {pages} {110501} (\bibinfo {year} {2006})}\BibitemShut {NoStop}%
\bibitem [{\citenamefont {Nielsen}(2004)}]{nielsen2004optical}%
  \BibitemOpen
  \bibfield  {author} {\bibinfo {author} {\bibfnamefont {M.~A.}\ \bibnamefont {Nielsen}},\ }\bibfield  {title} {\bibinfo {title} {Optical quantum computation using cluster states},\ }\href@noop {} {\bibfield  {journal} {\bibinfo  {journal} {Physical review letters}\ }\textbf {\bibinfo {volume} {93}},\ \bibinfo {pages} {040503} (\bibinfo {year} {2004})}\BibitemShut {NoStop}%
\bibitem [{\citenamefont {Gottesman}\ \emph {et~al.}(2001)\citenamefont {Gottesman}, \citenamefont {Kitaev},\ and\ \citenamefont {Preskill}}]{gottesman2001encoding}%
  \BibitemOpen
  \bibfield  {author} {\bibinfo {author} {\bibfnamefont {D.}~\bibnamefont {Gottesman}}, \bibinfo {author} {\bibfnamefont {A.}~\bibnamefont {Kitaev}},\ and\ \bibinfo {author} {\bibfnamefont {J.}~\bibnamefont {Preskill}},\ }\bibfield  {title} {\bibinfo {title} {Encoding a qubit in an oscillator},\ }\href@noop {} {\bibfield  {journal} {\bibinfo  {journal} {Phys. Rev. A}\ }\textbf {\bibinfo {volume} {64}},\ \bibinfo {pages} {012310} (\bibinfo {year} {2001})}\BibitemShut {NoStop}%
\bibitem [{\citenamefont {Lloyd}\ and\ \citenamefont {Braunstein}(1999)}]{lloyd1999quantum}%
  \BibitemOpen
  \bibfield  {author} {\bibinfo {author} {\bibfnamefont {S.}~\bibnamefont {Lloyd}}\ and\ \bibinfo {author} {\bibfnamefont {S.~L.}\ \bibnamefont {Braunstein}},\ }\bibfield  {title} {\bibinfo {title} {Quantum computation over continuous variables},\ }\href@noop {} {\bibfield  {journal} {\bibinfo  {journal} {Physical Review Letters}\ }\textbf {\bibinfo {volume} {82}},\ \bibinfo {pages} {1784} (\bibinfo {year} {1999})}\BibitemShut {NoStop}%
\bibitem [{\citenamefont {Aaronson}\ and\ \citenamefont {Arkhipov}(2011)}]{aaronson2011computational}%
  \BibitemOpen
  \bibfield  {author} {\bibinfo {author} {\bibfnamefont {S.}~\bibnamefont {Aaronson}}\ and\ \bibinfo {author} {\bibfnamefont {A.}~\bibnamefont {Arkhipov}},\ }\bibfield  {title} {\bibinfo {title} {The computational complexity of linear optics},\ }in\ \href@noop {} {\emph {\bibinfo {booktitle} {Proceedings of the forty-third annual ACM symposium on Theory of computing}}}\ (\bibinfo {year} {2011})\ pp.\ \bibinfo {pages} {333--342}\BibitemShut {NoStop}%
\bibitem [{\citenamefont {Hamilton}\ \emph {et~al.}(2017)\citenamefont {Hamilton}, \citenamefont {Kruse}, \citenamefont {Sansoni}, \citenamefont {Barkhofen}, \citenamefont {Silberhorn},\ and\ \citenamefont {Jex}}]{hamilton2017gaussian}%
  \BibitemOpen
  \bibfield  {author} {\bibinfo {author} {\bibfnamefont {C.~S.}\ \bibnamefont {Hamilton}}, \bibinfo {author} {\bibfnamefont {R.}~\bibnamefont {Kruse}}, \bibinfo {author} {\bibfnamefont {L.}~\bibnamefont {Sansoni}}, \bibinfo {author} {\bibfnamefont {S.}~\bibnamefont {Barkhofen}}, \bibinfo {author} {\bibfnamefont {C.}~\bibnamefont {Silberhorn}},\ and\ \bibinfo {author} {\bibfnamefont {I.}~\bibnamefont {Jex}},\ }\bibfield  {title} {\bibinfo {title} {Gaussian boson sampling},\ }\href@noop {} {\bibfield  {journal} {\bibinfo  {journal} {Physical review letters}\ }\textbf {\bibinfo {volume} {119}},\ \bibinfo {pages} {170501} (\bibinfo {year} {2017})}\BibitemShut {NoStop}%
\bibitem [{\citenamefont {Madsen}\ \emph {et~al.}(2022)\citenamefont {Madsen}, \citenamefont {Laudenbach}, \citenamefont {Askarani}, \citenamefont {Rortais}, \citenamefont {Vincent}, \citenamefont {Bulmer}, \citenamefont {Miatto}, \citenamefont {Neuhaus}, \citenamefont {Helt}, \citenamefont {Collins} \emph {et~al.}}]{madsen2022quantum}%
  \BibitemOpen
  \bibfield  {author} {\bibinfo {author} {\bibfnamefont {L.~S.}\ \bibnamefont {Madsen}}, \bibinfo {author} {\bibfnamefont {F.}~\bibnamefont {Laudenbach}}, \bibinfo {author} {\bibfnamefont {M.~F.}\ \bibnamefont {Askarani}}, \bibinfo {author} {\bibfnamefont {F.}~\bibnamefont {Rortais}}, \bibinfo {author} {\bibfnamefont {T.}~\bibnamefont {Vincent}}, \bibinfo {author} {\bibfnamefont {J.~F.}\ \bibnamefont {Bulmer}}, \bibinfo {author} {\bibfnamefont {F.~M.}\ \bibnamefont {Miatto}}, \bibinfo {author} {\bibfnamefont {L.}~\bibnamefont {Neuhaus}}, \bibinfo {author} {\bibfnamefont {L.~G.}\ \bibnamefont {Helt}}, \bibinfo {author} {\bibfnamefont {M.~J.}\ \bibnamefont {Collins}}, \emph {et~al.},\ }\bibfield  {title} {\bibinfo {title} {Quantum computational advantage with a programmable photonic processor},\ }\href@noop {} {\bibfield  {journal} {\bibinfo  {journal} {Nature}\ }\textbf {\bibinfo {volume} {606}},\ \bibinfo {pages} {75} (\bibinfo {year} {2022})}\BibitemShut {NoStop}%
\bibitem [{\citenamefont {Zhong}\ \emph {et~al.}(2020)\citenamefont {Zhong}, \citenamefont {Wang}, \citenamefont {Deng}, \citenamefont {Chen}, \citenamefont {Peng}, \citenamefont {Luo}, \citenamefont {Qin}, \citenamefont {Wu}, \citenamefont {Ding}, \citenamefont {Hu} \emph {et~al.}}]{zhong2020quantum}%
  \BibitemOpen
  \bibfield  {author} {\bibinfo {author} {\bibfnamefont {H.-S.}\ \bibnamefont {Zhong}}, \bibinfo {author} {\bibfnamefont {H.}~\bibnamefont {Wang}}, \bibinfo {author} {\bibfnamefont {Y.-H.}\ \bibnamefont {Deng}}, \bibinfo {author} {\bibfnamefont {M.-C.}\ \bibnamefont {Chen}}, \bibinfo {author} {\bibfnamefont {L.-C.}\ \bibnamefont {Peng}}, \bibinfo {author} {\bibfnamefont {Y.-H.}\ \bibnamefont {Luo}}, \bibinfo {author} {\bibfnamefont {J.}~\bibnamefont {Qin}}, \bibinfo {author} {\bibfnamefont {D.}~\bibnamefont {Wu}}, \bibinfo {author} {\bibfnamefont {X.}~\bibnamefont {Ding}}, \bibinfo {author} {\bibfnamefont {Y.}~\bibnamefont {Hu}}, \emph {et~al.},\ }\bibfield  {title} {\bibinfo {title} {Quantum computational advantage using photons},\ }\href@noop {} {\bibfield  {journal} {\bibinfo  {journal} {Science}\ }\textbf {\bibinfo {volume} {370}},\ \bibinfo {pages} {1460} (\bibinfo {year} {2020})}\BibitemShut {NoStop}%
\bibitem [{\citenamefont {Zhong}\ \emph {et~al.}(2021)\citenamefont {Zhong}, \citenamefont {Deng}, \citenamefont {Qin}, \citenamefont {Wang}, \citenamefont {Chen}, \citenamefont {Peng}, \citenamefont {Luo}, \citenamefont {Wu}, \citenamefont {Gong}, \citenamefont {Su} \emph {et~al.}}]{zhong2021phase}%
  \BibitemOpen
  \bibfield  {author} {\bibinfo {author} {\bibfnamefont {H.-S.}\ \bibnamefont {Zhong}}, \bibinfo {author} {\bibfnamefont {Y.-H.}\ \bibnamefont {Deng}}, \bibinfo {author} {\bibfnamefont {J.}~\bibnamefont {Qin}}, \bibinfo {author} {\bibfnamefont {H.}~\bibnamefont {Wang}}, \bibinfo {author} {\bibfnamefont {M.-C.}\ \bibnamefont {Chen}}, \bibinfo {author} {\bibfnamefont {L.-C.}\ \bibnamefont {Peng}}, \bibinfo {author} {\bibfnamefont {Y.-H.}\ \bibnamefont {Luo}}, \bibinfo {author} {\bibfnamefont {D.}~\bibnamefont {Wu}}, \bibinfo {author} {\bibfnamefont {S.-Q.}\ \bibnamefont {Gong}}, \bibinfo {author} {\bibfnamefont {H.}~\bibnamefont {Su}}, \emph {et~al.},\ }\bibfield  {title} {\bibinfo {title} {Phase-programmable gaussian boson sampling using stimulated squeezed light},\ }\href@noop {} {\bibfield  {journal} {\bibinfo  {journal} {Physical review letters}\ }\textbf {\bibinfo {volume} {127}},\ \bibinfo {pages} {180502} (\bibinfo {year} {2021})}\BibitemShut {NoStop}%
\bibitem [{\citenamefont {Deng}\ \emph {et~al.}(2023)\citenamefont {Deng}, \citenamefont {Gu}, \citenamefont {Liu}, \citenamefont {Gong}, \citenamefont {Su}, \citenamefont {Zhang}, \citenamefont {Tang}, \citenamefont {Jia}, \citenamefont {Xu}, \citenamefont {Chen} \emph {et~al.}}]{deng2023gaussian}%
  \BibitemOpen
  \bibfield  {author} {\bibinfo {author} {\bibfnamefont {Y.-H.}\ \bibnamefont {Deng}}, \bibinfo {author} {\bibfnamefont {Y.-C.}\ \bibnamefont {Gu}}, \bibinfo {author} {\bibfnamefont {H.-L.}\ \bibnamefont {Liu}}, \bibinfo {author} {\bibfnamefont {S.-Q.}\ \bibnamefont {Gong}}, \bibinfo {author} {\bibfnamefont {H.}~\bibnamefont {Su}}, \bibinfo {author} {\bibfnamefont {Z.-J.}\ \bibnamefont {Zhang}}, \bibinfo {author} {\bibfnamefont {H.-Y.}\ \bibnamefont {Tang}}, \bibinfo {author} {\bibfnamefont {M.-H.}\ \bibnamefont {Jia}}, \bibinfo {author} {\bibfnamefont {J.-M.}\ \bibnamefont {Xu}}, \bibinfo {author} {\bibfnamefont {M.-C.}\ \bibnamefont {Chen}}, \emph {et~al.},\ }\bibfield  {title} {\bibinfo {title} {Gaussian boson sampling with pseudo-photon-number-resolving detectors and quantum computational advantage},\ }\href@noop {} {\bibfield  {journal} {\bibinfo  {journal} {Physical review letters}\ }\textbf {\bibinfo {volume} {131}},\ \bibinfo {pages} {150601} (\bibinfo {year} {2023})}\BibitemShut {NoStop}%
\bibitem [{\citenamefont {Liu}\ \emph {et~al.}(2025)\citenamefont {Liu}, \citenamefont {Su}, \citenamefont {Gong}, \citenamefont {Gu}, \citenamefont {Tang}, \citenamefont {Jia}, \citenamefont {Wei}, \citenamefont {Song}, \citenamefont {Wang}, \citenamefont {Zheng} \emph {et~al.}}]{liu2025robust}%
  \BibitemOpen
  \bibfield  {author} {\bibinfo {author} {\bibfnamefont {H.-L.}\ \bibnamefont {Liu}}, \bibinfo {author} {\bibfnamefont {H.}~\bibnamefont {Su}}, \bibinfo {author} {\bibfnamefont {S.-Q.}\ \bibnamefont {Gong}}, \bibinfo {author} {\bibfnamefont {Y.-C.}\ \bibnamefont {Gu}}, \bibinfo {author} {\bibfnamefont {H.-Y.}\ \bibnamefont {Tang}}, \bibinfo {author} {\bibfnamefont {M.-H.}\ \bibnamefont {Jia}}, \bibinfo {author} {\bibfnamefont {Q.}~\bibnamefont {Wei}}, \bibinfo {author} {\bibfnamefont {Y.}~\bibnamefont {Song}}, \bibinfo {author} {\bibfnamefont {D.}~\bibnamefont {Wang}}, \bibinfo {author} {\bibfnamefont {M.}~\bibnamefont {Zheng}}, \emph {et~al.},\ }\bibfield  {title} {\bibinfo {title} {Robust quantum computational advantage with programmable 3050-photon gaussian boson sampling},\ }\href@noop {} {\bibfield  {journal} {\bibinfo  {journal} {arXiv preprint arXiv:2508.09092}\ } (\bibinfo {year} {2025})}\BibitemShut {NoStop}%
\bibitem [{\citenamefont {Ferraro}\ \emph {et~al.}(2005)\citenamefont {Ferraro}, \citenamefont {Olivares},\ and\ \citenamefont {Paris}}]{ferraro2005gaussian}%
  \BibitemOpen
  \bibfield  {author} {\bibinfo {author} {\bibfnamefont {A.}~\bibnamefont {Ferraro}}, \bibinfo {author} {\bibfnamefont {S.}~\bibnamefont {Olivares}},\ and\ \bibinfo {author} {\bibfnamefont {M.~G.}\ \bibnamefont {Paris}},\ }\bibfield  {title} {\bibinfo {title} {Gaussian states in continuous variable quantum information},\ }\href@noop {} {\bibfield  {journal} {\bibinfo  {journal} {arXiv preprint quant-ph/0503237}\ } (\bibinfo {year} {2005})}\BibitemShut {NoStop}%
\bibitem [{\citenamefont {Barnett}\ and\ \citenamefont {Radmore}(2002)}]{barnett2002methods}%
  \BibitemOpen
  \bibfield  {author} {\bibinfo {author} {\bibfnamefont {S.}~\bibnamefont {Barnett}}\ and\ \bibinfo {author} {\bibfnamefont {P.~M.}\ \bibnamefont {Radmore}},\ }\href@noop {} {\emph {\bibinfo {title} {Methods in theoretical quantum optics}}},\ Vol.~\bibinfo {volume} {15}\ (\bibinfo  {publisher} {Oxford University Press},\ \bibinfo {year} {2002})\BibitemShut {NoStop}%
\bibitem [{\citenamefont {Quesada}\ \emph {et~al.}(2022)\citenamefont {Quesada}, \citenamefont {Helt}, \citenamefont {Menotti}, \citenamefont {Liscidini},\ and\ \citenamefont {Sipe}}]{quesada2022beyond}%
  \BibitemOpen
  \bibfield  {author} {\bibinfo {author} {\bibfnamefont {N.}~\bibnamefont {Quesada}}, \bibinfo {author} {\bibfnamefont {L.}~\bibnamefont {Helt}}, \bibinfo {author} {\bibfnamefont {M.}~\bibnamefont {Menotti}}, \bibinfo {author} {\bibfnamefont {M.}~\bibnamefont {Liscidini}},\ and\ \bibinfo {author} {\bibfnamefont {J.}~\bibnamefont {Sipe}},\ }\bibfield  {title} {\bibinfo {title} {Beyond photon pairs—nonlinear quantum photonics in the high-gain regime: a tutorial},\ }\href@noop {} {\bibfield  {journal} {\bibinfo  {journal} {Advances in Optics and Photonics}\ }\textbf {\bibinfo {volume} {14}},\ \bibinfo {pages} {291} (\bibinfo {year} {2022})}\BibitemShut {NoStop}%
\bibitem [{\citenamefont {Walschaers}(2021)}]{walschaers2021non}%
  \BibitemOpen
  \bibfield  {author} {\bibinfo {author} {\bibfnamefont {M.}~\bibnamefont {Walschaers}},\ }\bibfield  {title} {\bibinfo {title} {Non-gaussian quantum states and where to find them},\ }\href@noop {} {\bibfield  {journal} {\bibinfo  {journal} {PRX quantum}\ }\textbf {\bibinfo {volume} {2}},\ \bibinfo {pages} {030204} (\bibinfo {year} {2021})}\BibitemShut {NoStop}%
\bibitem [{\citenamefont {Tiedau}\ \emph {et~al.}(2019)\citenamefont {Tiedau}, \citenamefont {Bartley}, \citenamefont {Harder}, \citenamefont {Lita}, \citenamefont {Nam}, \citenamefont {Gerrits},\ and\ \citenamefont {Silberhorn}}]{tiedau2019scalability}%
  \BibitemOpen
  \bibfield  {author} {\bibinfo {author} {\bibfnamefont {J.}~\bibnamefont {Tiedau}}, \bibinfo {author} {\bibfnamefont {T.~J.}\ \bibnamefont {Bartley}}, \bibinfo {author} {\bibfnamefont {G.}~\bibnamefont {Harder}}, \bibinfo {author} {\bibfnamefont {A.~E.}\ \bibnamefont {Lita}}, \bibinfo {author} {\bibfnamefont {S.~W.}\ \bibnamefont {Nam}}, \bibinfo {author} {\bibfnamefont {T.}~\bibnamefont {Gerrits}},\ and\ \bibinfo {author} {\bibfnamefont {C.}~\bibnamefont {Silberhorn}},\ }\bibfield  {title} {\bibinfo {title} {Scalability of parametric down-conversion for generating higher-order fock states},\ }\href@noop {} {\bibfield  {journal} {\bibinfo  {journal} {Physical Review A}\ }\textbf {\bibinfo {volume} {100}},\ \bibinfo {pages} {041802} (\bibinfo {year} {2019})}\BibitemShut {NoStop}%
\bibitem [{\citenamefont {Konno}\ \emph {et~al.}(2024)\citenamefont {Konno}, \citenamefont {Asavanant}, \citenamefont {Hanamura}, \citenamefont {Nagayoshi}, \citenamefont {Fukui}, \citenamefont {Sakaguchi}, \citenamefont {Ide}, \citenamefont {China}, \citenamefont {Yabuno}, \citenamefont {Miki} \emph {et~al.}}]{konno2024logical}%
  \BibitemOpen
  \bibfield  {author} {\bibinfo {author} {\bibfnamefont {S.}~\bibnamefont {Konno}}, \bibinfo {author} {\bibfnamefont {W.}~\bibnamefont {Asavanant}}, \bibinfo {author} {\bibfnamefont {F.}~\bibnamefont {Hanamura}}, \bibinfo {author} {\bibfnamefont {H.}~\bibnamefont {Nagayoshi}}, \bibinfo {author} {\bibfnamefont {K.}~\bibnamefont {Fukui}}, \bibinfo {author} {\bibfnamefont {A.}~\bibnamefont {Sakaguchi}}, \bibinfo {author} {\bibfnamefont {R.}~\bibnamefont {Ide}}, \bibinfo {author} {\bibfnamefont {F.}~\bibnamefont {China}}, \bibinfo {author} {\bibfnamefont {M.}~\bibnamefont {Yabuno}}, \bibinfo {author} {\bibfnamefont {S.}~\bibnamefont {Miki}}, \emph {et~al.},\ }\bibfield  {title} {\bibinfo {title} {Logical states for fault-tolerant quantum computation with propagating light},\ }\href@noop {} {\bibfield  {journal} {\bibinfo  {journal} {Science}\ }\textbf {\bibinfo {volume} {383}},\ \bibinfo {pages} {289} (\bibinfo {year} {2024})}\BibitemShut {NoStop}%
\bibitem [{\citenamefont {Helstrom}(1969)}]{helstrom1969quantum}%
  \BibitemOpen
  \bibfield  {author} {\bibinfo {author} {\bibfnamefont {C.~W.}\ \bibnamefont {Helstrom}},\ }\bibfield  {title} {\bibinfo {title} {Quantum detection and estimation theory},\ }\href@noop {} {\bibfield  {journal} {\bibinfo  {journal} {Journal of Statistical Physics}\ }\textbf {\bibinfo {volume} {1}},\ \bibinfo {pages} {231} (\bibinfo {year} {1969})}\BibitemShut {NoStop}%
\bibitem [{\citenamefont {Mele}\ \emph {et~al.}(2025{\natexlab{a}})\citenamefont {Mele}, \citenamefont {Mele}, \citenamefont {Bittel}, \citenamefont {Eisert}, \citenamefont {Giovannetti}, \citenamefont {Lami}, \citenamefont {Leone},\ and\ \citenamefont {Oliviero}}]{mele2025learning}%
  \BibitemOpen
  \bibfield  {author} {\bibinfo {author} {\bibfnamefont {F.~A.}\ \bibnamefont {Mele}}, \bibinfo {author} {\bibfnamefont {A.~A.}\ \bibnamefont {Mele}}, \bibinfo {author} {\bibfnamefont {L.}~\bibnamefont {Bittel}}, \bibinfo {author} {\bibfnamefont {J.}~\bibnamefont {Eisert}}, \bibinfo {author} {\bibfnamefont {V.}~\bibnamefont {Giovannetti}}, \bibinfo {author} {\bibfnamefont {L.}~\bibnamefont {Lami}}, \bibinfo {author} {\bibfnamefont {L.}~\bibnamefont {Leone}},\ and\ \bibinfo {author} {\bibfnamefont {S.~F.}\ \bibnamefont {Oliviero}},\ }\bibfield  {title} {\bibinfo {title} {Learning quantum states of continuous-variable systems},\ }\href@noop {} {\bibfield  {journal} {\bibinfo  {journal} {Nature Physics}\ ,\ \bibinfo {pages} {1}} (\bibinfo {year} {2025}{\natexlab{a}})}\BibitemShut {NoStop}%
\bibitem [{\citenamefont {Anshu}\ and\ \citenamefont {Arunachalam}(2024)}]{anshu2024survey}%
  \BibitemOpen
  \bibfield  {author} {\bibinfo {author} {\bibfnamefont {A.}~\bibnamefont {Anshu}}\ and\ \bibinfo {author} {\bibfnamefont {S.}~\bibnamefont {Arunachalam}},\ }\bibfield  {title} {\bibinfo {title} {A survey on the complexity of learning quantum states},\ }\href@noop {} {\bibfield  {journal} {\bibinfo  {journal} {Nature Reviews Physics}\ }\textbf {\bibinfo {volume} {6}},\ \bibinfo {pages} {59} (\bibinfo {year} {2024})}\BibitemShut {NoStop}%
\bibitem [{\citenamefont {Bittel}\ \emph {et~al.}(2025{\natexlab{a}})\citenamefont {Bittel}, \citenamefont {Mele}, \citenamefont {Mele}, \citenamefont {Tirone},\ and\ \citenamefont {Lami}}]{bittel2025optimal}%
  \BibitemOpen
  \bibfield  {author} {\bibinfo {author} {\bibfnamefont {L.}~\bibnamefont {Bittel}}, \bibinfo {author} {\bibfnamefont {F.~A.}\ \bibnamefont {Mele}}, \bibinfo {author} {\bibfnamefont {A.~A.}\ \bibnamefont {Mele}}, \bibinfo {author} {\bibfnamefont {S.}~\bibnamefont {Tirone}},\ and\ \bibinfo {author} {\bibfnamefont {L.}~\bibnamefont {Lami}},\ }\bibfield  {title} {\bibinfo {title} {Optimal estimates of trace distance between bosonic gaussian states and applications to learning},\ }\href@noop {} {\bibfield  {journal} {\bibinfo  {journal} {Quantum}\ }\textbf {\bibinfo {volume} {9}},\ \bibinfo {pages} {1769} (\bibinfo {year} {2025}{\natexlab{a}})}\BibitemShut {NoStop}%
\bibitem [{\citenamefont {Mele}\ \emph {et~al.}(2025{\natexlab{b}})\citenamefont {Mele}, \citenamefont {Oliviero}, \citenamefont {Upreti},\ and\ \citenamefont {Chabaud}}]{mele2025symplectic}%
  \BibitemOpen
  \bibfield  {author} {\bibinfo {author} {\bibfnamefont {F.~A.}\ \bibnamefont {Mele}}, \bibinfo {author} {\bibfnamefont {S.~F.~E.}\ \bibnamefont {Oliviero}}, \bibinfo {author} {\bibfnamefont {V.}~\bibnamefont {Upreti}},\ and\ \bibinfo {author} {\bibfnamefont {U.}~\bibnamefont {Chabaud}},\ }\bibfield  {title} {\bibinfo {title} {The symplectic rank of non-gaussian quantum states},\ }\href@noop {} {\bibfield  {journal} {\bibinfo  {journal} {arXiv preprint arXiv:2504.19319}\ } (\bibinfo {year} {2025}{\natexlab{b}})}\BibitemShut {NoStop}%
\bibitem [{\citenamefont {Bittel}\ \emph {et~al.}(2025{\natexlab{b}})\citenamefont {Bittel}, \citenamefont {Mele}, \citenamefont {Eisert},\ and\ \citenamefont {Mele}}]{bittel2025energy}%
  \BibitemOpen
  \bibfield  {author} {\bibinfo {author} {\bibfnamefont {L.}~\bibnamefont {Bittel}}, \bibinfo {author} {\bibfnamefont {F.~A.}\ \bibnamefont {Mele}}, \bibinfo {author} {\bibfnamefont {J.}~\bibnamefont {Eisert}},\ and\ \bibinfo {author} {\bibfnamefont {A.~A.}\ \bibnamefont {Mele}},\ }\bibfield  {title} {\bibinfo {title} {Energy-independent tomography of gaussian states},\ }\href@noop {} {\bibfield  {journal} {\bibinfo  {journal} {arXiv preprint arXiv:2508.14979}\ } (\bibinfo {year} {2025}{\natexlab{b}})}\BibitemShut {NoStop}%
\bibitem [{\citenamefont {Holevo}(2024)}]{holevo2024estimates}%
  \BibitemOpen
  \bibfield  {author} {\bibinfo {author} {\bibfnamefont {A.~S.}\ \bibnamefont {Holevo}},\ }\bibfield  {title} {\bibinfo {title} {On estimates of trace-norm distance between quantum gaussian states},\ }\href@noop {} {\bibfield  {journal} {\bibinfo  {journal} {arXiv preprint arXiv:2408.11400}\ } (\bibinfo {year} {2024})}\BibitemShut {NoStop}%
\bibitem [{\citenamefont {van Luijk}\ \emph {et~al.}(2024)\citenamefont {van Luijk}, \citenamefont {Galke}, \citenamefont {Hahn},\ and\ \citenamefont {Burgarth}}]{van2024error}%
  \BibitemOpen
  \bibfield  {author} {\bibinfo {author} {\bibfnamefont {L.}~\bibnamefont {van Luijk}}, \bibinfo {author} {\bibfnamefont {N.}~\bibnamefont {Galke}}, \bibinfo {author} {\bibfnamefont {A.}~\bibnamefont {Hahn}},\ and\ \bibinfo {author} {\bibfnamefont {D.}~\bibnamefont {Burgarth}},\ }\bibfield  {title} {\bibinfo {title} {Error bounds for lie group representations in quantum mechanics},\ }\href@noop {} {\bibfield  {journal} {\bibinfo  {journal} {Journal of Physics A: Mathematical and Theoretical}\ }\textbf {\bibinfo {volume} {57}},\ \bibinfo {pages} {105301} (\bibinfo {year} {2024})}\BibitemShut {NoStop}%
\bibitem [{\citenamefont {Becker}\ \emph {et~al.}(2021)\citenamefont {Becker}, \citenamefont {Datta}, \citenamefont {Lami},\ and\ \citenamefont {Rouz{\'e}}}]{becker2021energy}%
  \BibitemOpen
  \bibfield  {author} {\bibinfo {author} {\bibfnamefont {S.}~\bibnamefont {Becker}}, \bibinfo {author} {\bibfnamefont {N.}~\bibnamefont {Datta}}, \bibinfo {author} {\bibfnamefont {L.}~\bibnamefont {Lami}},\ and\ \bibinfo {author} {\bibfnamefont {C.}~\bibnamefont {Rouz{\'e}}},\ }\bibfield  {title} {\bibinfo {title} {Energy-constrained discrimination of unitaries, quantum speed limits, and a gaussian solovay-kitaev theorem},\ }\href@noop {} {\bibfield  {journal} {\bibinfo  {journal} {Physical Review Letters}\ }\textbf {\bibinfo {volume} {126}},\ \bibinfo {pages} {190504} (\bibinfo {year} {2021})}\BibitemShut {NoStop}%
\bibitem [{\citenamefont {Bulmer}\ \emph {et~al.}(2026)\citenamefont {Bulmer}, \citenamefont {Mart{\'\i}nez-Cifuentes}, \citenamefont {Bell},\ and\ \citenamefont {Quesada}}]{bulmer2026simulating}%
  \BibitemOpen
  \bibfield  {author} {\bibinfo {author} {\bibfnamefont {J.~F.}\ \bibnamefont {Bulmer}}, \bibinfo {author} {\bibfnamefont {J.}~\bibnamefont {Mart{\'\i}nez-Cifuentes}}, \bibinfo {author} {\bibfnamefont {B.~A.}\ \bibnamefont {Bell}},\ and\ \bibinfo {author} {\bibfnamefont {N.}~\bibnamefont {Quesada}},\ }\bibfield  {title} {\bibinfo {title} {Simulating lossy and partially distinguishable quantum optical circuits: theory, algorithms, and applications to experiment validation and state preparation},\ }\href@noop {} {\bibfield  {journal} {\bibinfo  {journal} {Advanced Photonics}\ }\textbf {\bibinfo {volume} {8}},\ \bibinfo {pages} {016010} (\bibinfo {year} {2026})}\BibitemShut {NoStop}%
\bibitem [{\citenamefont {Quesada}\ and\ \citenamefont {Arrazola}(2020)}]{quesada2020exact}%
  \BibitemOpen
  \bibfield  {author} {\bibinfo {author} {\bibfnamefont {N.}~\bibnamefont {Quesada}}\ and\ \bibinfo {author} {\bibfnamefont {J.~M.}\ \bibnamefont {Arrazola}},\ }\bibfield  {title} {\bibinfo {title} {Exact simulation of gaussian boson sampling in polynomial space and exponential time},\ }\href@noop {} {\bibfield  {journal} {\bibinfo  {journal} {Physical Review Research}\ }\textbf {\bibinfo {volume} {2}},\ \bibinfo {pages} {023005} (\bibinfo {year} {2020})}\BibitemShut {NoStop}%
\bibitem [{\citenamefont {Qi}\ \emph {et~al.}(2022)\citenamefont {Qi}, \citenamefont {Cifuentes}, \citenamefont {Br{\'a}dler}, \citenamefont {Israel}, \citenamefont {Kalajdzievski},\ and\ \citenamefont {Quesada}}]{qi2022efficient}%
  \BibitemOpen
  \bibfield  {author} {\bibinfo {author} {\bibfnamefont {H.}~\bibnamefont {Qi}}, \bibinfo {author} {\bibfnamefont {D.}~\bibnamefont {Cifuentes}}, \bibinfo {author} {\bibfnamefont {K.}~\bibnamefont {Br{\'a}dler}}, \bibinfo {author} {\bibfnamefont {R.}~\bibnamefont {Israel}}, \bibinfo {author} {\bibfnamefont {T.}~\bibnamefont {Kalajdzievski}},\ and\ \bibinfo {author} {\bibfnamefont {N.}~\bibnamefont {Quesada}},\ }\bibfield  {title} {\bibinfo {title} {Efficient sampling from shallow gaussian quantum-optical circuits with local interactions},\ }\href@noop {} {\bibfield  {journal} {\bibinfo  {journal} {Physical Review A}\ }\textbf {\bibinfo {volume} {105}},\ \bibinfo {pages} {052412} (\bibinfo {year} {2022})}\BibitemShut {NoStop}%
\bibitem [{\citenamefont {Mele}\ \emph {et~al.}(2025{\natexlab{c}})\citenamefont {Mele}, \citenamefont {Barbarino}, \citenamefont {Giovannetti},\ and\ \citenamefont {Fanizza}}]{mele2025achievable}%
  \BibitemOpen
  \bibfield  {author} {\bibinfo {author} {\bibfnamefont {F.~A.}\ \bibnamefont {Mele}}, \bibinfo {author} {\bibfnamefont {G.}~\bibnamefont {Barbarino}}, \bibinfo {author} {\bibfnamefont {V.}~\bibnamefont {Giovannetti}},\ and\ \bibinfo {author} {\bibfnamefont {M.}~\bibnamefont {Fanizza}},\ }\bibfield  {title} {\bibinfo {title} {Achievable rates in non-asymptotic bosonic quantum communication},\ }\href@noop {} {\bibfield  {journal} {\bibinfo  {journal} {arXiv preprint arXiv:2502.05524}\ } (\bibinfo {year} {2025}{\natexlab{c}})}\BibitemShut {NoStop}%
\bibitem [{\citenamefont {Kreuzer}\ \emph {et~al.}(1981)\citenamefont {Kreuzer}, \citenamefont {Miller},\ and\ \citenamefont {Berger}}]{kreuzer1981lanczos}%
  \BibitemOpen
  \bibfield  {author} {\bibinfo {author} {\bibfnamefont {K.}~\bibnamefont {Kreuzer}}, \bibinfo {author} {\bibfnamefont {H.}~\bibnamefont {Miller}},\ and\ \bibinfo {author} {\bibfnamefont {W.}~\bibnamefont {Berger}},\ }\bibfield  {title} {\bibinfo {title} {The lanczos algorithm for self-adjoint operators},\ }\href@noop {} {\bibfield  {journal} {\bibinfo  {journal} {Physics Letters A}\ }\textbf {\bibinfo {volume} {81}},\ \bibinfo {pages} {429} (\bibinfo {year} {1981})}\BibitemShut {NoStop}%
\bibitem [{\citenamefont {Saad}(2003)}]{saad2003iterative}%
  \BibitemOpen
  \bibfield  {author} {\bibinfo {author} {\bibfnamefont {Y.}~\bibnamefont {Saad}},\ }\href@noop {} {\emph {\bibinfo {title} {Iterative methods for sparse linear systems}}}\ (\bibinfo  {publisher} {SIAM},\ \bibinfo {year} {2003})\BibitemShut {NoStop}%
\bibitem [{\citenamefont {Caruso}\ \emph {et~al.}(2021)\citenamefont {Caruso}, \citenamefont {Michelangeli} \emph {et~al.}}]{caruso2021inverse}%
  \BibitemOpen
  \bibfield  {author} {\bibinfo {author} {\bibfnamefont {N.~A.}\ \bibnamefont {Caruso}}, \bibinfo {author} {\bibfnamefont {A.}~\bibnamefont {Michelangeli}}, \emph {et~al.},\ }\href@noop {} {\emph {\bibinfo {title} {Inverse linear problems on Hilbert space and their Krylov solvability}}}\ (\bibinfo  {publisher} {Springer},\ \bibinfo {year} {2021})\BibitemShut {NoStop}%
\bibitem [{Note1()}]{Note1}%
  \BibitemOpen
  \bibinfo {note} {Since $\lambda _+ = \protect \mathrm {Tr}(\protect \hat {P}) = \protect \mathrm {Tr}(\protect \hat {Q})$, it follows that $\lambda _+$ is equal to the sum of the magnitudes of all negative eigenvalues of $|\psi \rangle \langle \psi |-\protect \hat {\varrho }$, which implies that $\lambda _+$ is greater than or equal to each of these absolute values}\BibitemShut {NoStop}%
\bibitem [{\citenamefont {Mart{\'\i}nez-Cifuentes}\ \emph {et~al.}(2023)\citenamefont {Mart{\'\i}nez-Cifuentes}, \citenamefont {Fonseca-Romero},\ and\ \citenamefont {Quesada}}]{martinez2023classical}%
  \BibitemOpen
  \bibfield  {author} {\bibinfo {author} {\bibfnamefont {J.}~\bibnamefont {Mart{\'\i}nez-Cifuentes}}, \bibinfo {author} {\bibfnamefont {K.~M.}\ \bibnamefont {Fonseca-Romero}},\ and\ \bibinfo {author} {\bibfnamefont {N.}~\bibnamefont {Quesada}},\ }\bibfield  {title} {\bibinfo {title} {Classical models may be a better explanation of the jiuzhang 1.0 gaussian boson sampler than its targeted squeezed light model},\ }\href@noop {} {\bibfield  {journal} {\bibinfo  {journal} {Quantum}\ }\textbf {\bibinfo {volume} {7}},\ \bibinfo {pages} {1076} (\bibinfo {year} {2023})}\BibitemShut {NoStop}%
\bibitem [{\citenamefont {Xu}(2025)}]{xu2025bargmann}%
  \BibitemOpen
  \bibfield  {author} {\bibinfo {author} {\bibfnamefont {J.}~\bibnamefont {Xu}},\ }\bibfield  {title} {\bibinfo {title} {Bargmann invariants of gaussian states},\ }\href@noop {} {\bibfield  {journal} {\bibinfo  {journal} {arXiv preprint arXiv:2508.07155}\ } (\bibinfo {year} {2025})}\BibitemShut {NoStop}%
\bibitem [{Note2()}]{Note2}%
  \BibitemOpen
  \bibinfo {note} {This implementation is available at \protect \url {https://github.com/polyquantique/trace_distance_lczs}}\BibitemShut {NoStop}%
\bibitem [{\citenamefont {Bourassa}\ \emph {et~al.}(2021{\natexlab{b}})\citenamefont {Bourassa}, \citenamefont {Quesada}, \citenamefont {Tzitrin}, \citenamefont {Sz{\'a}va}, \citenamefont {Isacsson}, \citenamefont {Izaac}, \citenamefont {Sabapathy}, \citenamefont {Dauphinais},\ and\ \citenamefont {Dhand}}]{bourassa2021fast}%
  \BibitemOpen
  \bibfield  {author} {\bibinfo {author} {\bibfnamefont {J.~E.}\ \bibnamefont {Bourassa}}, \bibinfo {author} {\bibfnamefont {N.}~\bibnamefont {Quesada}}, \bibinfo {author} {\bibfnamefont {I.}~\bibnamefont {Tzitrin}}, \bibinfo {author} {\bibfnamefont {A.}~\bibnamefont {Sz{\'a}va}}, \bibinfo {author} {\bibfnamefont {T.}~\bibnamefont {Isacsson}}, \bibinfo {author} {\bibfnamefont {J.}~\bibnamefont {Izaac}}, \bibinfo {author} {\bibfnamefont {K.~K.}\ \bibnamefont {Sabapathy}}, \bibinfo {author} {\bibfnamefont {G.}~\bibnamefont {Dauphinais}},\ and\ \bibinfo {author} {\bibfnamefont {I.}~\bibnamefont {Dhand}},\ }\bibfield  {title} {\bibinfo {title} {Fast simulation of bosonic qubits via gaussian functions in phase space},\ }\href@noop {} {\bibfield  {journal} {\bibinfo  {journal} {PRX Quantum}\ }\textbf {\bibinfo {volume} {2}},\ \bibinfo {pages} {040315} (\bibinfo {year} {2021}{\natexlab{b}})}\BibitemShut {NoStop}%
\bibitem [{\citenamefont {Solodovnikova}\ \emph {et~al.}(2025)\citenamefont {Solodovnikova}, \citenamefont {Andersen},\ and\ \citenamefont {Neergaard-Nielsen}}]{solodovnikova2025fast}%
  \BibitemOpen
  \bibfield  {author} {\bibinfo {author} {\bibfnamefont {O.}~\bibnamefont {Solodovnikova}}, \bibinfo {author} {\bibfnamefont {U.~L.}\ \bibnamefont {Andersen}},\ and\ \bibinfo {author} {\bibfnamefont {J.~S.}\ \bibnamefont {Neergaard-Nielsen}},\ }\bibfield  {title} {\bibinfo {title} {Fast simulations of continuous-variable circuits using the coherent state decomposition},\ }\href@noop {} {\bibfield  {journal} {\bibinfo  {journal} {arXiv preprint arXiv:2508.06175}\ } (\bibinfo {year} {2025})}\BibitemShut {NoStop}%
\bibitem [{\citenamefont {Albert}\ \emph {et~al.}(2018)\citenamefont {Albert}, \citenamefont {Noh}, \citenamefont {Duivenvoorden}, \citenamefont {Young}, \citenamefont {Brierley}, \citenamefont {Reinhold}, \citenamefont {Vuillot}, \citenamefont {Li}, \citenamefont {Shen}, \citenamefont {Girvin} \emph {et~al.}}]{albert2018performance}%
  \BibitemOpen
  \bibfield  {author} {\bibinfo {author} {\bibfnamefont {V.~V.}\ \bibnamefont {Albert}}, \bibinfo {author} {\bibfnamefont {K.}~\bibnamefont {Noh}}, \bibinfo {author} {\bibfnamefont {K.}~\bibnamefont {Duivenvoorden}}, \bibinfo {author} {\bibfnamefont {D.~J.}\ \bibnamefont {Young}}, \bibinfo {author} {\bibfnamefont {R.}~\bibnamefont {Brierley}}, \bibinfo {author} {\bibfnamefont {P.}~\bibnamefont {Reinhold}}, \bibinfo {author} {\bibfnamefont {C.}~\bibnamefont {Vuillot}}, \bibinfo {author} {\bibfnamefont {L.}~\bibnamefont {Li}}, \bibinfo {author} {\bibfnamefont {C.}~\bibnamefont {Shen}}, \bibinfo {author} {\bibfnamefont {S.~M.}\ \bibnamefont {Girvin}}, \emph {et~al.},\ }\bibfield  {title} {\bibinfo {title} {Performance and structure of single-mode bosonic codes},\ }\href@noop {} {\bibfield  {journal} {\bibinfo  {journal} {Physical Review A}\ }\textbf {\bibinfo {volume} {97}},\ \bibinfo {pages} {032346} (\bibinfo {year} {2018})}\BibitemShut {NoStop}%
\bibitem [{\citenamefont {Dias}\ and\ \citenamefont {K{\"o}nig}(2024)}]{dias2024classical}%
  \BibitemOpen
  \bibfield  {author} {\bibinfo {author} {\bibfnamefont {B.}~\bibnamefont {Dias}}\ and\ \bibinfo {author} {\bibfnamefont {R.}~\bibnamefont {K{\"o}nig}},\ }\bibfield  {title} {\bibinfo {title} {Classical simulation of non-gaussian bosonic circuits},\ }\href@noop {} {\bibfield  {journal} {\bibinfo  {journal} {Physical Review A}\ }\textbf {\bibinfo {volume} {110}},\ \bibinfo {pages} {042402} (\bibinfo {year} {2024})}\BibitemShut {NoStop}%
\bibitem [{\citenamefont {Quesada}(2025)}]{quesada2025s}%
  \BibitemOpen
  \bibfield  {author} {\bibinfo {author} {\bibfnamefont {N.}~\bibnamefont {Quesada}},\ }\bibfield  {title} {\bibinfo {title} {What’s my phase again? computing the vacuum-to-vacuum amplitude of quadratic bosonic evolution},\ }\href@noop {} {\bibfield  {journal} {\bibinfo  {journal} {Proceedings of the Royal Society A: Mathematical, Physical and Engineering Sciences}\ }\textbf {\bibinfo {volume} {481}} (\bibinfo {year} {2025})}\BibitemShut {NoStop}%
\bibitem [{\citenamefont {Yao}\ \emph {et~al.}(2024)\citenamefont {Yao}, \citenamefont {Miatto},\ and\ \citenamefont {Quesada}}]{yao2024riemannian}%
  \BibitemOpen
  \bibfield  {author} {\bibinfo {author} {\bibfnamefont {Y.}~\bibnamefont {Yao}}, \bibinfo {author} {\bibfnamefont {F.}~\bibnamefont {Miatto}},\ and\ \bibinfo {author} {\bibfnamefont {N.}~\bibnamefont {Quesada}},\ }\bibfield  {title} {\bibinfo {title} {Riemannian optimization of photonic quantum circuits in phase and fock space},\ }\href@noop {} {\bibfield  {journal} {\bibinfo  {journal} {SciPost Physics}\ }\textbf {\bibinfo {volume} {17}},\ \bibinfo {pages} {082} (\bibinfo {year} {2024})}\BibitemShut {NoStop}%
\bibitem [{\citenamefont {Sun}\ \emph {et~al.}(2026)\citenamefont {Sun}, \citenamefont {Combes},\ and\ \citenamefont {Hackl}}]{sun2026representation}%
  \BibitemOpen
  \bibfield  {author} {\bibinfo {author} {\bibfnamefont {J.}~\bibnamefont {Sun}}, \bibinfo {author} {\bibfnamefont {J.}~\bibnamefont {Combes}},\ and\ \bibinfo {author} {\bibfnamefont {L.}~\bibnamefont {Hackl}},\ }\bibfield  {title} {\bibinfo {title} {Representation theory of inhomogeneous gaussian unitaries},\ }\href@noop {} {\bibfield  {journal} {\bibinfo  {journal} {arXiv preprint arXiv:2602.08611}\ } (\bibinfo {year} {2026})}\BibitemShut {NoStop}%
\bibitem [{\citenamefont {Bergmann}\ and\ \citenamefont {van Loock}(2016)}]{bergmann2016quantum}%
  \BibitemOpen
  \bibfield  {author} {\bibinfo {author} {\bibfnamefont {M.}~\bibnamefont {Bergmann}}\ and\ \bibinfo {author} {\bibfnamefont {P.}~\bibnamefont {van Loock}},\ }\bibfield  {title} {\bibinfo {title} {Quantum error correction against photon loss using multicomponent cat states},\ }\href@noop {} {\bibfield  {journal} {\bibinfo  {journal} {Physical Review A}\ }\textbf {\bibinfo {volume} {94}},\ \bibinfo {pages} {042332} (\bibinfo {year} {2016})}\BibitemShut {NoStop}%
\bibitem [{\citenamefont {Boon}\ \emph {et~al.}(2026)\citenamefont {Boon}, \citenamefont {Landon-Cardinal},\ and\ \citenamefont {Quesada}}]{boon2026generalised}%
  \BibitemOpen
  \bibfield  {author} {\bibinfo {author} {\bibfnamefont {A.~J.}\ \bibnamefont {Boon}}, \bibinfo {author} {\bibfnamefont {O.}~\bibnamefont {Landon-Cardinal}},\ and\ \bibinfo {author} {\bibfnamefont {N.}~\bibnamefont {Quesada}},\ }\bibfield  {title} {\bibinfo {title} {Generalised all-optical cat correction},\ }\href@noop {} {\bibfield  {journal} {\bibinfo  {journal} {arXiv preprint arXiv:2603.03263}\ } (\bibinfo {year} {2026})}\BibitemShut {NoStop}%
\bibitem [{\citenamefont {Reed}\ and\ \citenamefont {Simon}(1978)}]{reed1978iv}%
  \BibitemOpen
  \bibfield  {author} {\bibinfo {author} {\bibfnamefont {M.}~\bibnamefont {Reed}}\ and\ \bibinfo {author} {\bibfnamefont {B.}~\bibnamefont {Simon}},\ }\href@noop {} {\emph {\bibinfo {title} {IV: Analysis of Operators}}},\ Vol.~\bibinfo {volume} {4}\ (\bibinfo  {publisher} {Elsevier},\ \bibinfo {year} {1978})\BibitemShut {NoStop}%
\bibitem [{\citenamefont {Houde}\ \emph {et~al.}(2024)\citenamefont {Houde}, \citenamefont {McCutcheon},\ and\ \citenamefont {Quesada}}]{houde2024matrix}%
  \BibitemOpen
  \bibfield  {author} {\bibinfo {author} {\bibfnamefont {M.}~\bibnamefont {Houde}}, \bibinfo {author} {\bibfnamefont {W.}~\bibnamefont {McCutcheon}},\ and\ \bibinfo {author} {\bibfnamefont {N.}~\bibnamefont {Quesada}},\ }\bibfield  {title} {\bibinfo {title} {Matrix decompositions in quantum optics: Takagi/autonne, bloch--messiah/euler, iwasawa, and williamson},\ }\href@noop {} {\bibfield  {journal} {\bibinfo  {journal} {Canadian Journal of Physics}\ }\textbf {\bibinfo {volume} {102}},\ \bibinfo {pages} {497} (\bibinfo {year} {2024})}\BibitemShut {NoStop}%
\bibitem [{\citenamefont {Brask}(2021)}]{brask2021gaussian}%
  \BibitemOpen
  \bibfield  {author} {\bibinfo {author} {\bibfnamefont {J.~B.}\ \bibnamefont {Brask}},\ }\bibfield  {title} {\bibinfo {title} {Gaussian states and operations--a quick reference},\ }\href@noop {} {\bibfield  {journal} {\bibinfo  {journal} {arXiv preprint arXiv:2102.05748}\ } (\bibinfo {year} {2021})}\BibitemShut {NoStop}%
\end{thebibliography}%

\appendix
\newpage
    \section{Supplemental Material}
    In this Supplemental Material we show details on the derivation of the results of the main text. In particular, we show a proof for Theorem~\ref{theo:only_one_positive_eival}, a derivation of a variational lower bound on the trace distance between a pure and a mixed Gaussian state, and a proof for Eqs.~\eqref{eq:lanczos_vectors_update} to~\eqref{eq:tridiagonal_entries}. We also show a slight generalization of the problem of computing trace distances between linear combinations of Gaussian states.

    \subsection{1. Proof of Theorem~\ref{theo:only_one_positive_eival}}
    Let $\hat{\varrho}$ be a mixed state defined over the Hilbert space $\mathcal{H}=L^2(\mathbb{R}^M)$, and let $|\psi\rangle\in \mathcal{H}$ be a normalized vector representing a pure state. Since $\hat{\varrho}\geq0$, $\langle\psi|\hat{\varrho}|\psi\rangle\geq0$ for all $|\psi\rangle$. Moreover, using the Cauchy-Schwarz inequality, we have that $\langle\psi|\hat{\varrho}|\psi\rangle\leq \sqrt{\mathrm{Tr}(\hat{\varrho}\hat{\varrho}^\dagger)}\sqrt{\mathrm{Tr}(|\psi\rangle\langle \psi||\psi\rangle\langle \psi|)}= \sqrt{\mathrm{Tr}(\hat{\varrho}^2)}< 1$. As mentioned in the main text, the operator $|\psi\rangle\langle\psi|-\hat{\varrho}$, which is trace class (with trace $0$) and therefore compact, has a positive orthogonal decomposition of the form $|\psi\rangle\langle\psi|-\hat{\varrho} = \hat{P}-\hat{Q}$, where $\hat{P},\hat{Q}\geq 0$ and $\hat{P}\hat{Q}=\hat{Q}\hat{P} = 0$.

    Before beginningbegining the proof, we first state the \textit{Courant-Fisher theorem}, also known as the \textit{min-max principle}, whose proof can be found in Ref.~\cite{reed1978iv}.

    \begin{theorem}[Courant-Fischer]
        \label{theo:courant_fischer}
        Let $\hat{A}$ be a compact, Hermitian, linear operator acting on the Hilbert space $\mathcal{H}$. Arrange the corresponding \textit{positive} eigenvalues in non-increasing order as $\lambda_1\geq\lambda_2\geq\dots$, then
        \begin{equation}
            \lambda_k = \max_{S_k}\min_{\substack{|\psi\rangle\in S_k\\\||\psi\rangle\|=1}}\langle\psi|\hat{A}|\psi\rangle,
            \label{eq:CF_theorem_one}
        \end{equation}
        \begin{equation}
            \lambda_k = \min_{S_{k-1}}\max_{\substack{|\psi\rangle\in S_{k-1}^{\perp}\\\||\psi\rangle\|=1}}\langle\psi|\hat{A}|\psi\rangle,
            \label{eq:CF_theorem_two}
        \end{equation}
        where $S_k$ is any $k$-dimensional subspace of $\mathcal{H}$, and $S_k^{\perp}$ denotes the orthogonal complement of $S_k$. A similar set of equations can be obtained for the negative eigenvalues of $\hat{A}$ (by changing $\hat{A}\rightarrow-\hat{A}$).
    \end{theorem}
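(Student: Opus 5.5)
The approach is to reduce everything to the spectral theorem for compact self-adjoint operators and then use two dimension-counting arguments, one for each formula. By that theorem, $\mathcal{H}$ splits orthogonally into the closed span of an orthonormal family $\{|e_j\rangle\}$ with $\hat{A}|e_j\rangle=\lambda_j|e_j\rangle$, with the positive eigenvalues listed as $\lambda_1\geq\lambda_2\geq\dots>0$ (with multiplicity), and a closed invariant subspace $\mathcal{K}$ on which $\hat{A}\leq 0$. The subspace $\mathcal{K}$ is spanned by $\ker\hat{A}$ and the eigenvectors with negative eigenvalues. The key elementary consequence is the following. If $|\psi\rangle$ is a unit vector orthogonal to $|e_1\rangle,\dots,|e_{k-1}\rangle$, expand it in the eigenbasis. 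Then $\langle\psi|\hat{A}|\psi\rangle=\sum_{j\geq k}\lambda_j|\langle e_j|\psi\rangle|^2+\langle\psi|\hat{A}|_{\mathcal{K}}|\psi\rangle\leq\lambda_k$. Similarly, for a unit $|\psi\rangle\in\mathrm{span}\{|e_1\rangle,\dots,|e_k\rangle\}$ we get $\langle\psi|\hat{A}|\psi\rangle\geq\lambda_k$.

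For Eq.~\eqref{eq:CF_theorem_one}, I would first take $S_k=\mathrm{span}\{|e_1\rangle,\dots,|e_k\rangle\}$. By the second bound, the minimum over this subspace is at least $\lambda_k$, and it equals $\lambda_k$ at $|e_k\rangle$. The minimum exists because the unit sphere of a finite-dimensional space is compact. Conversely, let $S_k$ be an arbitrary $k$-dimensional subspace. The orthogonal projection of $S_k$ onto the $(k-1)$-dimensional space $\mathrm{span}\{|e_1\rangle,\dots,|e_{k-1}\rangle\}$ has a nontrivial kernel. Hence $S_k$ contains a unit vector orthogonal to $|e_1\rangle,\dots,|e_{k-1}\rangle$, and by the first bound its Rayleigh quotient is at most $\lambda_k$. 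So every $S_k$ has minimum at most $\lambda_k$, which gives the max-min identity.

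For Eq.~\eqref{eq:CF_theorem_two}, the roles are mirrored. Choosing $S_{k-1}=\mathrm{span}\{|e_1\rangle,\dots,|e_{k-1}\rangle\}$, the first bound shows that the supremum over unit vectors in $S_{k-1}^\perp$ is at most $\lambda_k$, and it is attained at $|e_k\rangle$. For an arbitrary $(k-1)$-dimensional $S_{k-1}$, project the $k$-dimensional space $\mathrm{span}\{|e_1\rangle,\dots,|e_k\rangle\}$ onto $S_{k-1}$. The kernel is nontrivial, so this span meets $S_{k-1}^\perp$ in a unit vector whose Rayleigh quotient is at least $\lambda_k$. Thus the supremum over $S_{k-1}^\perp$ is at least $\lambda_k$, and minimizing over $S_{k-1}$ gives $\lambda_k$.

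The step I expect to need the most care is the infinite-dimensional one: justifying that the inner ``max'' in Eq.~\eqref{eq:CF_theorem_two} is genuinely attained for every $S_{k-1}$, not merely a supremum, since $S_{k-1}^\perp$ has a non-compact unit sphere. I would handle this with compactness of $\hat{A}$. The quadratic form $|\psi\rangle\mapsto\langle\psi|\hat{A}|\psi\rangle$ is weakly sequentially continuous, and the closed unit ball of $S_{k-1}^\perp$ is weakly compact, so the supremum over the ball is attained. The supremum is positive (at least $\lambda_k>0$ by the argument above), so a maximizer must lie on the unit sphere; otherwise rescaling it up would increase the value. Finally, I would note the convention that the statement concerns only those $k$ for which a $k$-th positive eigenvalue exists. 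For negative eigenvalues, the same argument applied to $-\hat{A}$ gives the analogous formulas, as remarked in the statement.
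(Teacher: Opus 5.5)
Your proof is correct. The paper does not prove this statement; it quotes it from Reed and Simon, so your argument replaces a citation rather than competing with an in-paper proof.

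\textbf{Comparison with the cited result.} The textbook min-max principle is stated for general self-adjoint operators bounded from below, so here it would be applied to $-\hat{A}$. It uses $\sup$/$\inf$ rather than $\max$/$\min$. It also has a second alternative: once the eigenvalues below the essential spectrum run out, the variational value equals the edge of the essential spectrum, which for a compact operator on an infinite-dimensional space is $0$. Your route needs only the spectral theorem for compact self-adjoint operators and two dimension counts. It gives two things the bare citation does not.
\begin{itemize}
\item An explicit proof that the inner $\max$ in Eq.~\eqref{eq:CF_theorem_two} is attained for every $S_{k-1}$. Your weak-compactness argument for this genuinely uses the compactness of $\hat{A}$ and the positivity of $\lambda_k$.
\item An explicit statement of the conventions the paper leaves implicit. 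Eigenvalues must be counted with multiplicity. The index $k$ must be restricted to values for which a $k$-th positive eigenvalue exists; otherwise the inner supremum can equal $0$ without being attained.
\end{itemize}

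\textbf{What Theorem~\ref{theo:only_one_positive_eival} actually needs.} Its proof uses only the easy half of your second argument. Suppose two positive eigenvalues $\lambda_1\geq\lambda_2>0$ existed. Then $\mathrm{span}\{|e_1\rangle,|e_2\rangle\}$ would contain a unit vector $|\chi\rangle\perp|\psi\rangle$, and
\[
\lambda_2\leq\langle\chi|(|\psi\rangle\langle\psi|-\hat{\varrho})|\chi\rangle=-\langle\chi|\hat{\varrho}|\chi\rangle\leq 0 .
\]
So that proof needs neither the attainment question nor the full min-max identity.
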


    Now we can prove Theorem~\ref{theo:only_one_positive_eival}, which, let us recall, states that the operator $|\psi\rangle\langle\psi|-\hat{\varrho}$ has only one positive eigenvalue.

    \begin{proof}
        Suppose that $|\psi\rangle\langle\psi|-\hat{\varrho}$ has more than one positive eigenvalue, so we can arrange them in non-increasing order $\lambda_1\geq\lambda_2\geq\dots$. Since $|\psi\rangle\langle\psi|-\hat{\varrho}$ is compact and Hermitian, we can use the Courant-Fischer theorem to compute all the possible eigenvalues. Using Eq.~\eqref{eq:CF_theorem_one}, we have that 
        \begin{equation}
            \lambda_1=\max_{S_1}\min_{\substack{|\chi\rangle\in S_1\\\||\chi\rangle\|=1}}\langle\chi|(|\psi\rangle\langle\psi|-\hat{\varrho})|\chi\rangle.
        \end{equation}
        Every one-dimensional subspace of $\mathcal{H}$ can be written as $\mathrm{span}(|\phi\rangle)$, for some arbitrary, not necessarily normalized, $|\phi\rangle\in \mathcal{H}$. Any normalized vector in $\mathrm{span}(|\phi\rangle)$ will have the form $|\chi\rangle=\alpha|\phi\rangle/\||\phi\rangle\|$, where $\alpha$ is a complex number satisfying $|\alpha|=1$. For all these vectors we have that $\langle\chi|\hat{A}|\chi\rangle=\langle\phi|\hat{A}|\phi\rangle/ \langle\phi|\phi\rangle$, for any $\hat{A}$. We may therefore write
        \begin{equation}
            \lambda_1=\max_{|\phi\rangle}\frac{\langle\phi|(|\psi\rangle\langle\psi|-\hat{\varrho})|\phi\rangle}{\langle\phi|\phi\rangle}.
        \end{equation}
        Let $|\phi\rangle=|\psi\rangle$, then we have that $\lambda_1\geq \langle\psi|(|\psi\rangle\langle\psi|-\hat{\varrho})|\psi\rangle = 1-\langle\psi|\hat{\varrho}|\psi\rangle >0$, since $0\leq\langle\psi|\hat{\varrho}|\psi\rangle<1$. This confirms that there is at least one positive eigenvalue of $|\psi\rangle\langle\psi|-\hat{\varrho}$.

        We move on now to compute $\lambda_2$. Using Eq.~\eqref{eq:CF_theorem_two}, we have that
        \begin{equation}
            \lambda_2 = \min_{S_{1}}\max_{\substack{|\chi\rangle\in S_{1}^{\perp}\\\||\chi\rangle\|=1}}\langle\chi|(|\psi\rangle\langle\psi|-\hat{\varrho})|\chi\rangle.
        \end{equation}
        Noting once more that every $S_1$ has the form $\mathrm{span}(|\phi\rangle)$, for some $|\phi\rangle\in \mathcal{H}$, we can write
        \begin{equation}
            \lambda_2 = \min_{|\phi\rangle}\max_{\substack{|\chi\rangle\perp |\phi\rangle\\\||\chi\rangle\|=1}}\langle\chi|(|\psi\rangle\langle\psi|-\hat{\varrho})|\chi\rangle.
        \end{equation}
        Again, let $|\phi\rangle=|\psi\rangle$. Since $|\chi\rangle\perp |\psi\rangle$, we have that $\langle\chi|\psi\rangle\langle\psi|\chi\rangle = 0$, and then 
        \begin{equation}
            \lambda_2 \leq \max_{\substack{|\chi\rangle\perp |\phi\rangle\\\||\chi\rangle\|=1}}-\langle\chi|\hat{\varrho}|\chi\rangle=-\min_{\substack{|\chi\rangle\perp |\phi\rangle\\\||\chi\rangle\|=1}}\langle\chi|\hat{\varrho}|\chi\rangle\leq 0.
        \end{equation}
        This is a contradiction, since we had assumed that $|\psi\rangle\langle\psi|-\hat{\varrho}$ had more than one \textit{strictly positive} eigenvalue. We therefore conclude that $|\psi\rangle\langle\psi|-\hat{\varrho}$ has only one positive eigenvalue, $\lambda_1=\lambda_+$, which implies that $\hat{P}$ is rank-1, i.e., $\hat{P}=\lambda_+|\lambda_+\rangle\langle\lambda_+|$.
    \end{proof}

    \subsection{2. A variational lower bound on the trace distance between a pure and a mixed Gaussian state}

    Consider now a pure Gaussian state $\ket{\psi}\bra{\psi} \leftrightarrow (\bm{V}_{\psi},  \bm{r}_{\psi})$ and a mixed state $\hat{\varrho} \leftrightarrow (\bm{V}_{\hat{\varrho}}, \bm{r}_{\hat{\varrho}})$, each of them characterized by their corresponding vectors of first moments and covariance matrices.
    
Using the invariance of the trace distance under unitary transformations, and the fact that any pure Gaussian state can be constructed by applying a unitary operation over the vacuum~\cite{serafini2023quantum}, it is not difficult to show that
\begin{align}
d(\hat{\varrho}, \ket{\psi} \bra{\psi}) = d(\hat{\tau}, \ket{0} \bra{0})
\end{align}
where $\hat{\tau} \leftrightarrow (\bm{V}, \bm{r})$ is a Gaussian state whose covariance matrix and vector of first moments can be, in at most cubic time in the number of modes, obtained from those of $\ket{\psi}\bra{\psi}$ and $\hat{\varrho}$. Moreover, $\ket{0}\bra{0}$ is the vacuum, a Gaussian state with $\bm{V}=\tfrac{\hbar}{2} \mathbb{I}_2, \bm{r}=\bm{0}$.

We would like to give a lower bound on the trace distance between the pure and the mixed state by proposing a variational ket. This ket must have a non-zero overlap with $\ket{0}$ and thus we propose that
\begin{align}\label{ansatz}
\ket{\nu} = \cos \tfrac{\theta}{2} \ket{0} + e^{i \varphi} \sin \tfrac{\theta}{2} \ket{\phi}
\end{align}
with $\braket{0|\phi}=0$. A reasonable proposal is to choose the second state, $\ket{\phi}$, to be a Gaussian state. However, no Gaussian state is orthogonal to the vacuum; this issue is easily fixed by choosing a Gaussian state $\ket{\mu}$ and projecting out its vacuum component to obtain 
\begin{align}
\ket{\phi} = \mathcal{N}\left( \ket{\mu} - \braket{0|\mu} \ket{0} \right),\,\, \mathcal{N} = \frac{1}{\sqrt{1-|\braket{0|\mu}|^2}}.
\end{align}
With the parametrization in Eq.~\eqref{ansatz} we obtain
\begin{align}
\braket{\nu|\left( \ket{0} \bra{0} - \hat{\tau}   \right) |\nu} =&  \tfrac{1}{2} \left( 1+ \cos \theta \right)- \tfrac{1}{2} \left( 1+ \cos \theta \right) \braket{0 | \hat{\tau} | 0 }\nonumber\\
& - \tfrac{1}{2} \left( 1- \cos \theta \right) \braket{\phi|\hat{\tau}|\phi} \nonumber\\
&- \sin \theta  \ \mathrm{Re}\left(e^{i \varphi} \braket{0|\hat{\tau}|\phi} \right).
\end{align}
For a fixed $\ket{\phi}$, extremizing the equation above corresponds to choosing $\varphi$ such that $e^{i \varphi} \braket{0|\hat{\tau}|\phi}$ is purely real and positive (and thus $| \braket{0|\hat{\tau}|\phi}|=e^{i \varphi} \braket{0|\hat{\tau}|\phi}$) and we can write
\begin{align}
&\max_{\varphi} \braket{\nu|\left(  \ket{0} \bra{0} - \hat{\tau} \right) |\nu} = \tfrac{1}{2}\left(1 - \braket{0|\hat{\tau}|0} - \braket{\phi|\hat{\tau}|\phi}  \right) \nonumber \\
&\quad -\underbrace{\tfrac{1}{2}(-1 + \braket{0|\hat{\tau}|0} - \braket{\phi|\hat{\tau}|\phi} )}_{\equiv x} \cos \theta  -\underbrace{ | \braket{0|\hat{\tau}|\phi}| }_{\equiv y}\sin \theta.
\end{align}
This quantity is maximized when $\cos \theta = -x/\sqrt{x^2+y^2}$ and $\sin \theta =-y/\sqrt{x^2+y^2}$ 
to obtain
\begin{align}
&\max_{\varphi, \theta} \braket{\nu|\left( \ket{0} \bra{0} - \hat{\tau}   \right) |\nu}=\tfrac{1}{2}\left(1 - \braket{0|\hat{\tau}|0} - \braket{\phi|\hat{\tau}|\phi}  \right) \nonumber \\
&+\tfrac{1}{2} \sqrt{4| \braket{0|\hat{\tau}|\phi}|^2 + (-1 + \braket{0|\hat{\tau}|0} - \braket{\phi|\hat{\tau}|\phi} )^2}.
\end{align}

We can now evaluate the necessary pieces for the maximization in terms of the two Gaussian states, $\ket{0}$ and $\ket{\mu}$
\begin{align}
\braket{0|\hat{\tau}|\phi} &= \mathcal{N}  \bra{0}\hat{\tau} \left[\ket{\mu} - \braket{0|\mu} \ket{0} \right] \nonumber\\
&= \mathcal{N} \left[ \braket{0|\hat{\tau}|\mu} - \braket{0|\mu} \braket{0|\hat{\tau}|0}\right],
\end{align}
\begin{align}
\braket{\phi|\hat{\tau}|\phi} &= \mathcal{N}^2 \left[ \braket{\mu|\hat{\tau}|\mu} - \braket{0|\mu} \braket{\mu|\hat{\tau}|0} \right.\nonumber\\
& \left. \quad \quad \quad - \braket{\mu|0} \braket{0|\hat{\tau}|\mu}+ |\braket{\mu|0}|^2 \braket{0|\hat{\tau}|0}\right].
\end{align}
The covariance matrix $\bm{V}$ of the mixed Gaussian state $\hat{\tau}$ can be, using Williamson's decomposition~\cite{serafini2023quantum,houde2024matrix}, written as
\begin{align}\label{eq:williamson}
\bm{V} = \tfrac{\hbar}{2}\bm{S} [\oplus_{i=1}^M (2 \bar{n}_i +1)] \oplus [\oplus_{i=1}^M (2 \bar{n}_i +1)] \bm{S}^T.
 \end{align}
If we select as Gaussian trial state $\ket{\mu}$ one with covariance matrix $\bm{V}_{\text{pure}} = \tfrac{\hbar}{2} \bm{S} \bm{S}^T$ and vector of means $\bm{r}$ (same as for the mixed state) then we obtain
\begin{align}
\braket{\mu|\hat{\tau}|\mu} = \braket{0|\tau_{\text{th}}|0} = F_{\text{th}} =  \frac{1}{\prod_{i=1}^M (1+\bar{n}_i)},
\end{align}
\begin{align}
\braket{\mu|\hat{\tau}|0} = \braket{\mu|0} \braket{0|\tau_{\text{th}}|0} = \sqrt{F_{\text{coh}}} \times   F_{\text{th}},
\end{align}
\begin{align}
\braket{0|\hat{\tau}|0} = F.
\end{align}
with (we use the square of the Fidelity used in Ref. ~\cite{brask2021gaussian} Eq. (117) and also we use an arbitrary $\hbar$ as opposed to their $\hbar=1$)
\begin{align}
F_{\text{coh}} =  \braket{0 |\mu} \braket{\mu|0 }  = \frac{\exp\left[-\tfrac12 \bar{\bm{r}}^T \left\{ \bm{V}_{\text{pure}} + \frac{\hbar}{2} \mathbb{I} \right\}^{-1} \bar{\bm{r}}\right]}{\sqrt{\det \left(\frac{1}{\hbar}\left[ \bm{V}_{\text{pure}} + \frac{\hbar}{2} \mathbb{I}\right]\right)}}.
\end{align}

Finally we arrive to
\begin{widetext}
\begin{align}
&\max_{\varphi, \theta} \braket{\nu|\left( \ket{0} \bra{0} -\hat{\tau}   \right) |\nu}=\frac{1}{2}-\frac{1}{2 (1-F_{\text{coh}})}  \Bigg[ F+F_{\text{th}}-2 F_{\text{coh}} F_{\text{th}} \Bigg. \nonumber \\
&- \sqrt{F_{\text{coh}}^2 \left(-4 F+4 F_{\text{th}}+1\right)+F_{\text{coh}} \left(6 F-6 F_{\text{th}}-2\right)+\left(-F+F_{\text{th}}+1\right){}^2} \Bigg. \Bigg].
\end{align}
\end{widetext}
For pure states $F =F_{\text{coh}}$ and $F_{\text{th}}=1$ and the quantity above becomes $\sqrt{1-F_{\text{coh}}}$ which is precisely the value of the trace distance for two pure states.

\subsection{3. Lanczos algorithm relations}
Let us set $\hat{A}=|\psi\rangle\langle\psi|-\hat{\varrho}$ and $|c\rangle=|\psi\rangle$ in Lanczos algorithm. Then, the following proposition holds:
\begin{prop}
    $(|\psi\rangle\langle\psi|-\hat{\varrho})^\ell|\psi\rangle = \sum_{k=0}^\ell c_{\ell, k}\hat{\varrho}^k|\psi\rangle$, where $c_{\ell,k}\in\mathbb{R}$ for all $k$.
\end{prop}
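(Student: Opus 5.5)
We argue by induction on $\ell$, constructing the coefficients explicitly so that the recursion in Eq.~\eqref{eq:recurrence_matrix_C} comes out along the way. For $\ell=0$ the claim holds trivially with $c_{0,0}=1$. Suppose $(|\psi\rangle\langle\psi|-\hat{\varrho})^{\ell-1}|\psi\rangle=\sum_{k=0}^{\ell-1}c_{\ell-1,k}\hat{\varrho}^k|\psi\rangle$ with all $c_{\ell-1,k}\in\mathbb{R}$. Apply $\hat{A}=|\psi\rangle\langle\psi|-\hat{\varrho}$ once more and split the result into two pieces. The first piece is
\[
|\psi\rangle\langle\psi|\sum_k c_{\ell-1,k}\hat{\varrho}^k|\psi\rangle=\Big(\sum_k c_{\ell-1,k}\langle\psi|\hat{\varrho}^k|\psi\rangle\Big)|\psi\rangle,
\]
which is a multiple of $\hat{\varrho}^0|\psi\rangle$. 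The second piece is $-\sum_k c_{\ell-1,k}\hat{\varrho}^{k+1}|\psi\rangle$, which is a combination of $\hat{\varrho}^1|\psi\rangle,\dots,\hat{\varrho}^\ell|\psi\rangle$.

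Collecting terms gives the coefficients
\[
c_{\ell,0}=\sum_{k}c_{\ell-1,k}\,\mathcal{G}_{k,0},\qquad c_{\ell,k}=-c_{\ell-1,k-1}\quad (1\le k\le\ell),
\]
where $\mathcal{G}_{k,0}=\langle\psi|\hat{\varrho}^k|\psi\rangle$. This is exactly $C_{\ell,0}=(\bm{C}\bm{\mathcal{G}})_{\ell-1,0}$ together with the shift relation for $k\ge1$, so the recursion of the main text is recovered as a by-product.

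The only substantive point is reality, and it rests on the sign flip in $c_{\ell,k}=-c_{\ell-1,k-1}$ and on the sum defining $c_{\ell,0}$. The shifted coefficients are real by the inductive hypothesis. For $c_{\ell,0}$, each $\langle\psi|\hat{\varrho}^k|\psi\rangle$ is real because $\hat{\varrho}$ is Hermitian, hence so is $\hat{\varrho}^k$, and the expectation value of a Hermitian operator is real; in fact it is nonnegative, since $\hat{\varrho}^k\ge0$. So $c_{\ell,0}$ is a real combination of real numbers.

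Well-definedness is not an issue: $\hat{\varrho}$ is bounded (indeed trace class), so every $\hat{\varrho}^k|\psi\rangle$ lies in $\mathcal{H}$ and all the manipulations above are finite sums of bounded operators acting on vectors. No domain subtleties arise, so I expect no real obstacle; the step requiring the most care is simply the bookkeeping of indices that matches the recursion to Eq.~\eqref{eq:recurrence_matrix_C}.
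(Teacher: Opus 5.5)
Your proof is correct and is essentially the paper's argument: the same induction on $\ell$, the same split of $(|\psi\rangle\langle\psi|-\hat{\varrho})$ into the projector term (which feeds $c_{\ell,0}$) and the shifted, sign-flipped term, and the same recursion $c_{\ell,0}=\sum_k c_{\ell-1,k}\langle\psi|\hat{\varrho}^k|\psi\rangle$, $c_{\ell,k}=-c_{\ell-1,k-1}$. Two minor differences are improvements: you start from $\ell=0$ rather than checking $\ell=1,2$ by hand, and you justify the reality of $c_{\ell,0}$ explicitly via the Hermiticity (indeed positivity) of $\hat{\varrho}^k$, which the paper leaves implicit.
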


\begin{proof}
    We proceed by induction. For $\ell=1$, we have that $(|\psi\rangle\langle\psi|-\hat{\varrho})^\ell|\psi\rangle = |\psi\rangle-\hat{\varrho}|\psi\rangle$, so $c_{1,0}=1$, $c_{1,1}=-1$. For $\ell=2$, we find $(|\psi\rangle\langle\psi|-\hat{\varrho})^\ell|\psi\rangle = (1-\langle\psi|\hat{\varrho}|\psi\rangle)|\psi\rangle-\hat{\varrho}|\psi\rangle+\hat{\varrho}^2|\psi\rangle$, which implies that $c_{2,0}=1-\langle\psi|\hat{\varrho}|\psi\rangle$, $c_{2,1}=-1$, $c_{2,2}=1$. Suppose that this relation holds for an arbitrary $\ell$, then for $\ell+1$, we have
    \begin{align}
        (|\psi\rangle&\langle\psi|-\hat{\varrho})^{\ell+1}|\psi\rangle = (|\psi\rangle\langle\psi|-\hat{\varrho})\sum_{k=0}^\ell c_{\ell,k}\hat{\varrho}^k|\psi\rangle\nonumber\\
        &=\left(\sum_{k=0}^\ell c_{\ell,k}\langle\psi|\hat{\varrho}^k|\psi\rangle\right)|\psi\rangle - \sum_{k=0}^\ell c_{\ell,k}\hat{\varrho}^{k+1}|\psi\rangle\nonumber\\
        &=c_{l+1,0}|\psi\rangle - \sum_{j=1}^{\ell+1} c_{\ell,j-1}\hat{\varrho}^{j}|\psi\rangle\nonumber\\
        &=c_{l+1,0}|\psi\rangle + \sum_{j=1}^{\ell+1} c_{\ell+1,j}\hat{\varrho}^{j}|\psi\rangle = \sum_{j=0}^{l+1}c_{l+1,j}\hat{\varrho}^j|\psi\rangle,
    \end{align}
    which completes the proof. From this procedure, we obtain the recurrence relations $c_{\ell,0}=\sum_{k=0}^{\ell-1} c_{\ell-1,k}\langle\psi|\hat{\varrho}^k|\psi\rangle$, $c_{\ell,j}=-c_{\ell-1,j-1}$ for $j\geq 1$, with the initial condition $c_{0,0}=1$.
  
\end{proof}

    Now, the vectors $|\tilde{\phi}_k\rangle$ are defined through the relation
    \[|\tilde{\phi}_k\rangle = (|\psi\rangle\langle\psi|-\hat{\varrho})^k|\psi\rangle - \sum_{j=0}^{k-1}\langle\phi_j|(|\psi\rangle\langle\psi|-\hat{\varrho})^k|\psi\rangle|\phi_j\rangle.\]
    The following result holds.
    \begin{prop}
        We have that $|\tilde{\phi}_k\rangle = \sum_{j=0}^k \tilde{d}_{k, j}\hat{\varrho}^j|\psi\rangle$, where $\tilde{d}_{k,j}\in\mathbb{R}$ for all $j$. After normalization, it holds that $|\phi_k\rangle = \sum_{j=0}^k d_{k, j}\hat{\varrho}^j|\psi\rangle$
    \end{prop}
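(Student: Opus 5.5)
We argue by strong induction on $k$, proving both claims at once: $|\tilde{\phi}_k\rangle$ and $|\phi_k\rangle$ are real linear combinations of $\hat{\varrho}^0|\psi\rangle,\dots,\hat{\varrho}^k|\psi\rangle$. The base case is immediate. Since $|\tilde{\phi}_0\rangle=|\phi_0\rangle=|\psi\rangle$, we have $\tilde{d}_{0,0}=d_{0,0}=1$.

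For the inductive step, assume the claims hold for all $j\le k-1$. By the previous proposition, $(|\psi\rangle\langle\psi|-\hat{\varrho})^k|\psi\rangle=\sum_{j=0}^k c_{k,j}\hat{\varrho}^j|\psi\rangle$ with real $c_{k,j}$. By the inductive hypothesis, each $|\phi_j\rangle$ with $j\le k-1$ lies in $\mathrm{span}_{\mathbb{R}}\{\hat{\varrho}^i|\psi\rangle\}_{i=0}^{j}$. Therefore
\[
|\tilde{\phi}_k\rangle=\sum_{i=0}^k c_{k,i}\hat{\varrho}^i|\psi\rangle-\sum_{j=0}^{k-1}\beta_{k,j}\sum_{i=0}^{j} d_{j,i}\hat{\varrho}^i|\psi\rangle,
\qquad
\beta_{k,j}=\langle\phi_j|(|\psi\rangle\langle\psi|-\hat{\varrho})^k|\psi\rangle .
\]
Collecting powers of $\hat{\varrho}$ gives the stated form, with $\tilde{d}_{k,k}=c_{k,k}=(-1)^k$ and $\tilde{d}_{k,i}=c_{k,i}-\sum_{j=i}^{k-1}\beta_{k,j}d_{j,i}$ for $i<k$.

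The key point is that these coefficients are real, which requires each $\beta_{k,j}$ to be real. Expand
\[
\beta_{k,j}=\sum_{i\le j}\sum_{m\le k} d_{j,i}\,c_{k,m}\langle\psi|\hat{\varrho}^{i+m}|\psi\rangle .
\]
The coefficients $d_{j,i}$ and $c_{k,m}$ are real, and $\langle\psi|\hat{\varrho}^{n}|\psi\rangle\ge 0$ is real because $\hat{\varrho}^n$ is Hermitian and positive. So $\beta_{k,j}$ is real; in matrix form it is an entry of $\bm{D}\bm{\mathcal{G}}\bm{C}^{\mathrm{T}}$. This also recovers the recursion in Eq.~\eqref{eq:recurrence_matrix_D}.

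For normalization, note that $\langle\tilde{\phi}_k|\tilde{\phi}_k\rangle=\tilde{\bm{d}}_k^{\mathrm{T}}\bm{\mathcal{G}}\tilde{\bm{d}}_k$ is real and positive. Dividing by its square root gives $d_{k,j}=\tilde{d}_{k,j}/\sqrt{\tilde{\bm{d}}_k^{\mathrm{T}}\bm{\mathcal{G}}\tilde{\bm{d}}_k}\in\mathbb{R}$, which closes the induction.

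The main obstacle is a degenerate case: $|\tilde{\phi}_k\rangle=0$, so normalization is undefined. This happens when $|\psi\rangle$ generates a finite-dimensional invariant Krylov subspace, for example when $\hat{\varrho}$ has finite rank as in the cat-state examples. In that case the algorithm terminates, and $\bm{T}_{k-1}$ already contains the exact eigenvalues of $\hat{A}$ restricted to that subspace. We handle this by adopting the standard convention that the statement holds up to the termination index.
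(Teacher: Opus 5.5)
Your proof is correct and follows the paper's argument: the same induction on $k$, expanding the Gram--Schmidt projections through the coefficients $c_{k,m}$ and $d_{j,i}$, identifying the overlaps as entries of $\bm{D}\bm{\mathcal{G}}\bm{C}^{\mathrm{T}}$, and obtaining the recursion $\tilde{D}_{k,b}=(\bm{C}-\bm{C}\bm{\mathcal{G}}\bm{D}^{\mathrm{T}}\bm{D})_{k,b}$ for $b<k$. You go slightly beyond the paper in two ways: you explicitly justify why these coefficients are real, and you flag the breakdown case $|\tilde{\phi}_k\rangle=0$, which does occur eventually for finite-rank $\hat{\varrho}$ and which the paper's ``after normalization'' passes over.
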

    \begin{proof}
        From the definition, and assuming that the relation holds up to $k-1$, we can see that
        \begin{align}
            &|\tilde{\phi}_k\rangle = \sum_{b=0}^kc_{k,b}\hat{\varrho}^b|\psi\rangle \nonumber\\
            &-\sum_{j=0}^{k-1}\left(\sum_{a=0}^jd_{j,a}\langle\psi|\hat{\varrho}^a\right)\left(\sum_{b=0}^kc_{k,b}\hat{\varrho}^b|\psi\rangle\right)\left(\sum_{r=0}^jd_{j,r}\hat{\varrho}^r|\psi\rangle\right)\nonumber\\
            &=\sum_{b=0}^kc_{k,b}\hat{\varrho}^b|\psi\rangle\nonumber\\
            &-\sum_{j=0}^{k-1}\left(\sum_{a=0}^j\sum_{b=0}^kd_{j,a}\langle\psi|\hat{\varrho}^{a+b}|\psi\rangle c_{k,b}\right)\sum_{r=0}^jd_{j,r}\hat{\varrho}^r|\psi\rangle.
        \end{align}
        Define the matrices $\bm{\mathcal{G}}$, $\bm{D}$, $\bm{C}$, with components $(\bm{\mathcal{G}})_{a,b}=\langle\psi|\hat{\varrho}^{a+b}|\psi\rangle$, $D_{j,k}=d_{j,k}$ and $C_{j,k}=c_{j,k}$. Since $D_{j,k}=0$ for $k>j$, we have that
        \begin{align}
            \sum_{a=0}^j\sum_{b=0}^kd_{j,a}\langle\psi|\hat{\varrho}^{a+b}|\psi\rangle c_{k,b}&=\sum_{a=0}^k\sum_{b=0}^kd_{j,a}\langle\psi|\hat{\varrho}^{a+b}|\psi\rangle c_{k,b}\nonumber\\
            &=(\bm{D}\bm{\mathcal{G}}\bm{C}^\mathrm{T})_{j,k}.
        \end{align}
        Then, 
        \begin{equation}
            |\tilde{\phi}_k\rangle = \sum_{b=0}^kC_{k,b}\hat{\varrho}^b|\psi\rangle -\sum_{j=0}^{k-1}\sum_{r=0}^j(\bm{D}\bm{\mathcal{G}}\bm{C}^\mathrm{T})_{j,k}D_{j,r}\hat{\varrho}^r|\psi\rangle.
        \end{equation}
        Interchanging the sums over $j$ and $r$, we find
        \begin{align}
            \sum_{j=0}^{k-1}\sum_{r=0}^j&(\bm{D}\bm{\mathcal{G}}\bm{C}^\mathrm{T})_{j,k}D_{j,r}\hat{\varrho}^r|\psi\rangle\nonumber\\
            &=\sum_{r=0}^{k-1}\left(\sum_{j=r}^{k-1}(\bm{D}^\mathrm{T})_{r,j}(\bm{D}\bm{\mathcal{G}}\bm{C}^\mathrm{T})_{j,k}\right)\hat{\varrho}^r|\psi\rangle.
        \end{align}
        Since $(\bm{D}^\mathrm{T})_{r,j}=0$ for $r<j$, we can write
        \begin{align}
            \sum_{j=r}^{k-1}(\bm{D}^\mathrm{T})_{r,j}(\bm{D}\bm{\mathcal{G}}\bm{C}^\mathrm{T})_{j,k}&=\sum_{j=0}^{k-1}(\bm{D}^\mathrm{T})_{r,j}(\bm{D}\bm{\mathcal{G}}\bm{C}^\mathrm{T})_{j,k}\nonumber\\
            &=(\bm{D}^\mathrm{T}\bm{D}\bm{\mathcal{G}}\bm{C}^\mathrm{T})_{r,k}
        \end{align}
        because, up to step $k-1$, $D_{j,k}=0$ for all $j\geq k$. This leads to the relation
        \begin{align}
            |\tilde{\phi}_k\rangle &= \sum_{b=0}^kC_{k,b}\hat{\varrho}^b|\psi\rangle -\sum_{r=0}^{k-1}(\bm{D}^\mathrm{T}\bm{D}\bm{\mathcal{G}}\bm{C}^\mathrm{T})_{r,k}\hat{\varrho}^r|\psi\rangle\nonumber\\
            &=\sum_{b=0}^kC_{k,b}\hat{\varrho}^b|\psi\rangle -\sum_{r=0}^{k-1}(\bm{C}\bm{\mathcal{G}}\bm{D}^\mathrm{T}\bm{D})_{k,r}\hat{\varrho}^r|\psi\rangle.
        \end{align}
        Finally, we can see that
        \begin{align}
            |\tilde{\phi}_k\rangle &= \sum_{b=0}^{k-1}(\bm{C}-\bm{C}\bm{\mathcal{G}}\bm{D}^\mathrm{T}\bm{D})_{k,b}\hat{\varrho}^b|\psi\rangle + C_{k,k}\hat{\varrho}^k|\psi\rangle\nonumber\\
            &=\sum_{b=0}^k\tilde{d}_{k,b}\hat{\varrho}^b|\psi\rangle,
        \end{align}
        where we find the recurrence relations $\tilde{D}_{k,k}=\tilde{d}_{k,k}=C_{k,k}$, $\tilde{D}_{k,b}=(\bm{C}-\bm{C}\bm{\mathcal{G}}\bm{D}^\mathrm{T}\bm{D})_{k,b}$ for $b\in\{0,\dots,k-1\}$, with the initial condition $\tilde{D}_{0,0}=1$. After normalization, we find the expression for $|\phi_k\rangle$.
    \end{proof}

    \subsection{4. A generalization of the algorithm for computing trace distances of linear combinations of Gaussian states}

    Suppose that $|\psi\rangle\langle\psi|$ is a linear combination of pure Gaussian states, while $\hat{\varrho}$ can be written as a linear combination of \textit{products} of Gaussian states. Namely, suppose that $|\psi\rangle = \sum_{j=1}^p a_j|g_j\rangle$ with $a_j\in \mathbb{C}$ and $|g_j\rangle$ a pure Gaussian state parametrized by the pair $(\bm{r}_j, \bm{V}_j)$. As before, we assume that the $\{a_j\}$ are chosen so that $|\psi\rangle$ is normalized and, without loss of generality, that $\langle0|g_j\rangle\geq 0$. On the other hand, we write $\hat{\varrho} = \sum_{k=1}^q b_k\hat{\nu}_k$, where $b_k\in\mathbb{C}$ and $\hat{\nu}_k$ is a \textit{product of Gaussian states}. Here, we also assume that the $\{b_k\}$ are chosen so that $\hat{\varrho}$ is a valid quantum state. By being a product of Gaussian states, each $\hat{\nu}_k$ is parametrized by a set of ordered pairs $\{(\bar{\bm{r}}^{(k)}_n, \bar{\bm{V}}^{(k)}_n)\}$, were the index $n$ runs from 1 up to the number of states composing $\hat{\nu}_k$. 
    
    A good example of this type of state is the two-component cat state $|\psi\rangle\langle\psi|\propto |\alpha\rangle\langle\alpha| + |-\alpha\rangle\langle-\alpha| + |\alpha\rangle\langle-\alpha| + |-\alpha\rangle\langle\alpha|$, which can be turned into the form described above by noticing that $|\alpha\rangle\langle-\alpha| = |\alpha\rangle\langle\alpha|\cdot|-\alpha\rangle\langle-\alpha| / \langle\alpha|-\alpha\rangle = e^{2|\alpha|^2}|\alpha\rangle\langle\alpha|\cdot|-\alpha\rangle\langle-\alpha|$. We explicitly introduce the product symbol between projectors to indicate that we are considering the product of two different operators instead of the outer product between two vectors.
    
    Using these expressions for $|\psi\rangle$ and $\hat{\varrho}$, one may show that
    \begin{align}
        \langle\psi|\hat{\varrho}^\ell|\psi\rangle = \sum_{\bm{u}\in [q]^\ell}&\prod_{i=1}^\ell b_{u_i}\sum_{j,k=1}^p \frac{a_j^*a_k}{\langle g_k|g_j\rangle}\nonumber\\
        &\times\mathrm{Tr}(\hat{\nu}_{u_1}\cdots\hat{\nu}_{u_\ell}|g_k\rangle\langle g_k|\cdot|g_j\rangle\langle g_j|),
        \label{eq:linear_comb_Bargmann_invariants}
    \end{align}
    where $[q]=\{1,\dots,q\}$, so $\bm{u}$ is a tuple of length $\ell$ whose entries are in $[q]$. Since the term inside the trace is always a product of Gaussian states, the computation of the metric boils down to a summation of several Gaussian Bargmann invariants that follow Eq.~\eqref{eq:bargmann_invariant}. However, the number of elements in this sum grows as $\mathcal{O}(q^\ell)$, making the time of computation of the metric exponential in the number of steps the algorithm runs for.

\end{document}